\documentclass[11pt]{article}

\usepackage[a4paper,margin=1.12in]{geometry}
\usepackage[T1]{fontenc}
\usepackage[utf8]{inputenc}
\usepackage{lmodern}
\usepackage{amsmath,amssymb,amsthm,mathtools}
\usepackage{microtype}
\usepackage{tikz}
\usepackage[colorlinks=true,linkcolor=black,citecolor=black,urlcolor=black]{hyperref}

\newtheorem{theorem}{Theorem}[section]
\newtheorem{lemma}[theorem]{Lemma}
\newtheorem{proposition}[theorem]{Proposition}
\newtheorem{corollary}[theorem]{Corollary}
\theoremstyle{definition}
\newtheorem{definition}[theorem]{Definition}
\newtheorem{example}[theorem]{Example}
\theoremstyle{remark}
\newtheorem{remark}[theorem]{Remark}

\newcommand{\apart}{\mathrel{\ddagger}}          
\newcommand{\cross}{\mathrel{\between}}          
\newcommand{\sep}{\mathrel{\mid}}                
\newcommand{\dy}[1]{\mathcal{D}(#1)}             
\newcommand{\hull}{\mathsf{K}}                   
\newcommand{\cl}{\mathrm{cl}}
\newcommand{\R}{\mathbb{R}}
\newcommand{\atomf}[2]{\langle #1 \apart #2 \rangle} 
\newcommand{\Lang}{\mathcal{L}}
\newcommand{\gd}{\mathrm{gd}}
\newcommand{\Atoms}{\mathrm{At}}
\newcommand{\supp}{\mathrm{supp}}
\newcommand{\dist}{\mathrm{dist}}
\newcommand{\modsep}{\models_{\mathrm{sep}}}
\newcommand{\modgeo}{\models_{\mathrm{geo}}}
\newcommand{\vdz}{\vdash_{0}}
\newcommand{\marg}{\mu}

\title{\bfseries A Calculus of Apartness over Separoids:\\
Effective Convex Representation, Stratified Conservativity,\\
and the Complexity of Entailment}

\author{Faruk Alpay$^{1,*}$\qquad Baris Ba\c{s}aran$^{1}$\\[6pt]
{\small $^{1}$Department of Computer Engineering, Bahcesehir University, Istanbul, Turkey}\\[2pt]
{\small \texttt{faruk.alpay@bahcesehir.edu.tr}\;\;\texttt{baris.basaran@bahcesehir.edu.tr}}\\[2pt]
{\small $^{*}$Correspondence: \texttt{alpay@lightcap.ai}}}

\date{June 2026}

\begin{document}
\maketitle

\begin{abstract}
\noindent
Every finite family of compact convex bodies in a Euclidean space induces a relation of
\emph{apartness} between disjoint index sets: two index sets are apart when the convex hulls
of the corresponding unions of bodies are disjoint. We take this relation as the primitive of
a finite propositional language and organize the resulting model theory, representation
theory, and consequence problem. Three closure laws (symmetry, bilateral subsumption,
vacuity) axiomatize the relation; in separoid language this is the complement-side, or
separation-polarity, rendering of acyclic separoids. The contribution beyond this
cryptomorphic axiomatics is an effective rational representation theorem with uniform
margins and an exact account of the logical consequences it induces. Every finite apartness
separoid is realized by rational polytopes whose ambient coordinates are indexed by the
maximal separations of the structure. The construction is output-sensitive: maximal
separations and minimal Radon partitions can be enumerated from a full table, generators, or
a membership oracle; the coordinate values have polynomial bit height in the site alphabet;
and every coordinate is a readable certificate of one maximal separation. The realization
carries Euclidean clearance at least $2$ on every bilateral separation, is invariant under
outer parallel enlargement by any radius below $1$, and yields full-dimensional
$C^{1,1}$ bodies after thickening. The distance-function layer is used as convex-analytic
stability bookkeeping: it records Lipschitz comparison, monotonicity under inclusion, and
outer parallel bodies; its eikonal interpretation is contextual rather than an independent
PDE theorem. On the syntactic side, positive entailment is exactly one-premise subsumption.
The Boolean closure is sound, complete, and decidable for consequence over Euclidean scenes;
satisfiability is NP-complete, validity is coNP-complete, and positive entailment is linear
for sorted encodings. A formal stratification condition yields a conservativity theorem: the
Boolean stratum is inert on the atomic stratum, so compound reasoning introduces no new
apartness beyond closure. Finally, the consequence relations of fixed ambient dimension form
a strictly decreasing hierarchy that stabilizes at dimension exactly $|E|-1$, where $E$ is
the site alphabet; the stabilization threshold uses the Strausz-Bracho representation theorem,
while the proof-readable rational-margin representation is independent and generally higher
dimensional.
\medskip

\noindent\emph{Keywords:} separoids; apartness; Radon partitions; convex bodies; rational
realization; uniform margins; subsumption; conservativity; decidability.

\smallskip
\noindent\emph{2020 MSC:} 52A35, 52C40, 03B70, 03B25; secondary 35F21.
\end{abstract}

\section{Introduction}\label{sec:intro}

A hyperplane that leaves one family of convex bodies on one side and another family on the
other side certifies a single bit of information: the two families are \emph{apart}. The
present paper is a study of what can be said, proved, and decided when these bits are all that
is retained. We fix a finite set $E$ of \emph{sites}, attach to each site a nonempty compact
convex body in some Euclidean space, and record, for every pair of disjoint subsets
$A,B\subseteq E$, whether the convex hulls of the corresponding unions of bodies are disjoint.
The record is a relation $\apart$ on disjoint pairs. Everything else about the bodies (their
dimension, their shape, their coordinates) is deliberately forgotten, and one of the paper's
aims is to determine exactly how much the forgetting costs. The answer, made precise in
Theorem~\ref{thm:hierarchy}, is that it costs nothing once the ambient dimension passes a
threshold that we compute exactly: each dimension below $|E|-1$ is genuinely new, in that it
strictly enlarges the stock of refutable assertions, while the dimensions from $|E|-1$ onward
repeat one another without leaving any trace in the consequence relation.

The combinatorial structures that arise in this way are not new. They are the
\emph{separoids} of Strausz and his collaborators
\cite{ABMOS02,Strausz03,BS06,NS06,Strausz07}, introduced in geometric transversal theory as
the common abstraction of Radon partitions of point configurations, oriented matroids, and
separation patterns of convex bodies. The separoid literature is resolutely combinatorial and
categorical: it studies homomorphisms, universality, Tverberg-type colourings, and the
topology of transversal spaces \cite{MPS06,MS04}. The present paper adds a logical and effective layer to that theory: a formal language whose
atoms are separation assertions, a syntactic closure calculus for the positive fragment, a
Boolean consequence theorem for Euclidean scenes, and a complexity classification of the
associated decision problems. The axiomatics themselves are not presented as new; they are
the separoid axioms seen from the complementary side. The new representation result is the
quantitative one needed for that logical reading. It is proved from first principles with
rational coordinates, explicit coordinate certificates, and uniform margins. A single
external theorem is imported from the literature, the dimension-$(|E|-1)$ realization of
\cite{Strausz03,BS06}; it is used only to identify the exact stabilization threshold of the
fixed-dimension hierarchy.

\subsection*{Contributions}

Throughout, $E$ is a finite set and $\dy{E}$ denotes the set of pairs $(A,B)$ of disjoint
subsets of $E$.

\paragraph{Axiomatics and cryptomorphy.} Section~\ref{sec:scenes} verifies that the apartness
relation of every family of compact convex bodies satisfies three laws: symmetry, bilateral
subsumption (apartness passes to componentwise subsets), and vacuity (the empty set is apart
from everything). Section~\ref{sec:separoids} takes these laws as axioms; the resulting
structures, which we call \emph{apartness separoids}, are shown to be cryptomorphic to the
acyclic separoids of \cite{Strausz03}, the translation being the involutive exchange of a
separation relation with its complementary relation of Radon partitions. This paragraph is a
placement statement, not a novelty claim: symmetry, monotonicity, and vacuity are inherited
from separoid theory after polarity is reversed. The word \emph{cryptomorphic} is used here
exactly as in matroid theory \cite{BLVSWZ}: two axiom systems, each complete, whose models
determine one another by an explicit dictionary.

\paragraph{Effective representation with margins.} Section~\ref{sec:representation} proves
that the three laws are complete for Euclidean realizability: every apartness separoid is the
apartness relation of a family of rational polytopes (Theorem~\ref{thm:representation}). The
construction assigns one coordinate to each maximal separation and one witness point to each
minimal Radon partition; all coordinates are rationals of explicitly bounded height. The
algorithmic meaning of \emph{effective} is made explicit in Proposition~\ref{prop:effective}:
from a full table, a generator presentation, a minimal-Radon presentation, or a membership
oracle, the maximal and minimal objects can be enumerated with the expected exponential
output cost and without hidden bit growth. Proposition~\ref{prop:position} then separates
what is inherited from \cite{Strausz03,BS06}, what is a quantitative strengthening, and what
is a re-indexed proof device. The realization is quantitatively robust: every separation
holds with Euclidean margin at least $2$, so the relation is unchanged when every body is
replaced by its outer parallel body at any radius $r<1$, and is in particular realized by
full-dimensional bodies with $C^{1,1}$ boundaries (Theorem~\ref{thm:stability}). The margin
is organized through distance functions and outer parallel bodies; the eikonal reading is
recorded as context for the same comparison facts, not as a separate PDE contribution
(Lemmas~\ref{lem:distance} and~\ref{lem:threshold}). Two exact computations close the
section: the geometric dimension of the $d$-dimensional simploid equals $d$
(Proposition~\ref{prop:simploid}), and the fixed-dimension consequence relations form a chain
\[
\models_0\ \supsetneq\ \models_1\ \supsetneq\ \cdots\ \supsetneq\ \models_{|E|-2}
\ \supsetneq\ \models_{|E|-1}\ =\ \models_{|E|}\ =\ \cdots\ =\ \modgeo,
\]
strict at every step below the threshold and constant from it onward
(Theorem~\ref{thm:hierarchy}).

\paragraph{Logical consequence.} Section~\ref{sec:calculus} introduces the language
$\Lang_E$, whose atoms are the assertions $\atomf{A}{B}$ for $(A,B)\in\dy{E}$, and a four-rule
sequent calculus for its positive fragment. Positive entailment collapses to subsumption
(Theorem~\ref{thm:subsumption}): an apartness assertion follows from a set of apartness
assertions precisely when it is vacuous or componentwise dominated, possibly after a swap, by
a single premise. This is framed as the exact syntactic shadow of separoid closure, not as a
large proof calculus: every derivable sequent has a derivation of height at most three, cut
is admissible, and interpolation trivializes. The Boolean closure is sound and complete for
consequence over scenes and decidable (Theorem~\ref{thm:completeness}); we separate the
routine finite propositional completeness from the geometric representation content.

\paragraph{Complexity.} Section~\ref{sec:complexity} classifies the decision problems:
satisfiability of a formula of $\Lang_E$ is NP-complete, validity is coNP-complete, and
positive entailment is decidable in time linear in the input. The lower bound rests on the
observation that singleton-pair atoms over disjoint site alphabets form an independent
family: every truth assignment to them is realized by a structure, hence by a scene of
rational polytopes.

\paragraph{Stratification and conservativity.} Section~\ref{sec:stratification} isolates a
discipline on presentations (every scheme constrains only material of strictly lower grade,
and no scheme exhibits a closed instance of the grade it derives), verifies that the calculus
obeys it, and proves the corresponding conservativity theorem: the Boolean stratum is inert
on the atomic stratum (Theorem~\ref{thm:conservativity}). No amount of compound reasoning,
classical negation included, ever forces an apartness that was not already a subsumption.
Together with Theorem~\ref{thm:stability} this gives the two halves of one fact: small
perturbations of the bodies cannot change the relation, and no reasoning over the relation
can change it either.

\subsection*{Related work}

Separoids were introduced in \cite{ABMOS02} and developed in
\cite{Strausz03,BS06,NS06,MPS06,MS04,Strausz07}; they generalize oriented matroids
\cite{BLVSWZ} and abstract the Radon partitions of \cite{Radon21}. The axiomatic study of
convexity through its separation and exchange behaviour goes back at least to Levi
\cite{Levi51} and continues through convex geometries \cite{EJ85} and the general theory of
convex structures \cite{vdV93}. The word \emph{apartness} and the discipline of taking
apartness rather than nearness as primitive are borrowed from constructive topology
\cite{BV11}; the borrowing is terminological, and Remark~\ref{rem:classical} delimits it
precisely. The classical counterpoint is the theory of proximity spaces \cite{NW70}, from
which the present relation differs in one law that changes everything
(Remark~\ref{rem:proximity}). Region-based spatial logics in the tradition of \cite{RCC92},
surveyed in \cite{Handbook07}, also reason about qualitative relations between extended
regions; the relation studied here is not among their primitives, because it is evaluated on
convex hulls of unions rather than on unions. The stability material of
Section~\ref{sec:margins} uses classical facts about distance functions and outer parallel
bodies, with the eikonal interpretation recorded only as background context
\cite{CL83,BCD97,Schneider14,Federer59}. Finally, Section~\ref{sec:conclusion} contrasts the
tameness established here with the universality phenomena that govern realizability of point
configurations in fixed dimension \cite{Mnev88}.

\section{Scenes and the apartness of convex bodies}\label{sec:scenes}

Throughout the paper $E$ is a finite nonempty set whose elements are called \emph{sites}, and
\[
\dy{E} \;=\; \{(A,B) : A,B\subseteq E,\ A\cap B=\emptyset\}
\]
is the set of \emph{disjoint pairs} over $E$. For $(A,B),(A',B')\in\dy{E}$ we write
$(A,B)\sqsubseteq (A',B')$ when $A\subseteq A'$ and $B\subseteq B'$, and we say that
$(A',B')$ \emph{dominates} $(A,B)$. A pair is \emph{vacuous} if one of its components is
empty, and \emph{bilateral} otherwise.

\begin{definition}[Scenes]\label{def:scene}
A \emph{scene} over $E$ is a pair $\mathcal{C}=(d,(C_e)_{e\in E})$ where $d\geq 0$ and each
$C_e\subseteq\R^d$ is a nonempty compact convex set. For $A\subseteq E$ put
\[
\hull_{\mathcal C}(A)\;=\;\operatorname{conv}\Bigl(\bigcup_{a\in A} C_a\Bigr),
\qquad \hull_{\mathcal C}(\emptyset)=\emptyset .
\]
The \emph{apartness relation of $\mathcal C$} is
\[
{\apart_{\mathcal C}} \;=\; \bigl\{(A,B)\in\dy{E} :
\hull_{\mathcal C}(A)\cap \hull_{\mathcal C}(B)=\emptyset\bigr\}.
\]
We read $A\apart_{\mathcal C} B$ as ``$A$ is apart from $B$ in $\mathcal C$''. The
complementary relation on $\dy{E}$ is written $\cross_{\mathcal C}$ and read ``$A$ crosses
$B$''; following \cite{Radon21,Strausz03}, a crossing pair is also called a \emph{Radon
partition} of $\mathcal C$.
\end{definition}

Since each $C_e$ is compact and $E$ is finite, $\bigcup_{a\in A}C_a$ is compact, and the
convex hull of a compact subset of $\R^d$ is compact; hence each $\hull_{\mathcal C}(A)$ with
$A\neq\emptyset$ is a nonempty compact convex set. The name \emph{apartness} is justified, and
quantified, by the following standard fact, recorded with proof because its margin is reused
throughout Section~\ref{sec:representation}.

\begin{definition}[Margin]\label{def:margin}
For compact convex $K,L\subseteq\R^d$ set $\marg(K,L)=\dist(K,L)=
\min\{\lVert x-y\rVert : x\in K,\ y\in L\}$ if both are nonempty, and
$\marg(K,L)=+\infty$ otherwise. For a scene $\mathcal C$ and $(A,B)\in\dy{E}$ write
$\marg_{\mathcal C}(A,B)=\marg\bigl(\hull_{\mathcal C}(A),\hull_{\mathcal C}(B)\bigr)$.
\end{definition}

\begin{lemma}[Strict separation, with margin]\label{lem:separation}
Let $K,L\subseteq\R^d$ be compact convex sets. Then $K\cap L=\emptyset$ if and only if
$\marg(K,L)>0$, if and only if there exist $u\in\R^d$, $c\in\R$, and $\varepsilon>0$ such
that $\langle u,x\rangle\le c-\varepsilon$ for all $x\in K$ and
$\langle u,y\rangle\ge c+\varepsilon$ for all $y\in L$. When both sets are nonempty one may
take $\varepsilon=\marg(K,L)^2/2$ with $\lVert u\rVert=\marg(K,L)$.
\end{lemma}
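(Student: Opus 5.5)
The plan is to prove the chain of equivalences by the standard nearest-point construction, and to read the quantitative clause off that same construction. I first dispose of the degenerate cases. If $K=\emptyset$ or $L=\emptyset$, then $K\cap L=\emptyset$ and $\marg(K,L)=+\infty>0$ by Definition~\ref{def:margin}. For the hyperplane condition I take $u=0$; if $L=\emptyset$ I choose $c=1$, and if $K=\emptyset$ I choose $c=-1$, in each case with $\varepsilon=1$. Then the condition on the empty set holds vacuously, and the condition on the other set reads $0\le 0$ or $0\ge 0$. So all three statements hold, and from now on I assume both sets are nonempty.

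With $K,L$ nonempty and compact, the map $(x,y)\mapsto\lVert x-y\rVert$ is continuous on the compact set $K\times L$, so the minimum defining $\marg(K,L)$ is attained at some pair $(x_0,y_0)$. Hence $\marg(K,L)=0$ if and only if $x_0=y_0$, which gives a common point; conversely a common point forces the minimum to be $0$. This proves the first equivalence. The implication from the hyperplane condition to disjointness is immediate: a point $z\in K\cap L$ would satisfy $c+\varepsilon\le\langle u,z\rangle\le c-\varepsilon$, which contradicts $\varepsilon>0$.

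The remaining implication, from positive margin to a separating hyperplane with the stated constants, is the substantive step. I set
\[
u=y_0-x_0,\qquad c=\tfrac12\langle u,x_0+y_0\rangle,\qquad \delta=\marg(K,L)=\lVert u\rVert>0 .
\]
The key variational inequality is $\langle u,x-x_0\rangle\le 0$ for all $x\in K$. To prove it, take $x\in K$ and put $x_t=x_0+t(x-x_0)$, which lies in $K$ by convexity. Expanding gives
\[
\lVert y_0-x_t\rVert^2=\delta^2-2t\langle u,x-x_0\rangle+t^2\lVert x-x_0\rVert^2 .
\]
By minimality this quantity is at least $\delta^2$ for every $t\in(0,1]$. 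Dividing by $t$ and letting $t\to0^+$ yields the inequality. Symmetrically, $\langle u,y-y_0\rangle\ge 0$ for all $y\in L$. Therefore, for every $x\in K$,
\[
\langle u,x\rangle\le\langle u,x_0\rangle=c-\tfrac12\langle u,y_0-x_0\rangle=c-\delta^2/2 ,
\]
and for every $y\in L$ we likewise get $\langle u,y\rangle\ge\langle u,y_0\rangle=c+\delta^2/2$. This gives $\varepsilon=\marg(K,L)^2/2$ with $\lVert u\rVert=\marg(K,L)$, as claimed.

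I expect the only delicate point to be the first-order argument for the variational inequality, specifically keeping $x_t$ inside $K$ (convexity) while exploiting minimality. Everything else is bookkeeping of the constants.
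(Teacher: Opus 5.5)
Your proposal is correct and follows essentially the same route as the paper: handle the empty cases with $u=0$, take a nearest pair in $K\times L$, set $u$ to the difference vector and $c$ to its value at the midpoint, and get the variational inequality from minimality along segments inside $K$ (and symmetrically inside $L$). The only differences are cosmetic. You expand $\lVert y_0-x_t\rVert^2$ and let $t\to0^+$ rather than arguing by contradiction from the derivative at $t=0$, and you spell out the equivalence between disjointness and positive margin, which the paper leaves implicit.
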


\begin{proof}
If such $u,c,\varepsilon$ exist, no point lies in both sets. If a set is empty, both
disjointness and $\marg=+\infty$ hold and the linear conditions on the empty side are
vacuous; take $u=0$, $c=-1$, $\varepsilon=\tfrac12$ when $K=\emptyset$. So assume both
nonempty and disjoint. The function $(x,y)\mapsto \lVert x-y\rVert$ is continuous on the
compact set $K\times L$, hence attains its minimum $\delta=\marg(K,L)$ at some $(p,q)$, and
$\delta>0$ by disjointness. Put $u=q-p$. We claim
$\langle u,x\rangle\le\langle u,p\rangle$ for all $x\in K$: otherwise
$\langle q-p,\,x-p\rangle>0$ for some $x\in K$, and with $p_t=p+t(x-p)\in K$ for
$t\in[0,1]$ we get
$\frac{d}{dt}\lVert p_t-q\rVert^2\big|_{t=0}=2\langle p-q,\,x-p\rangle<0$, so
$\lVert p_t-q\rVert<\delta$ for small $t>0$, contradicting minimality. Symmetrically
$\langle u,y\rangle\ge\langle u,q\rangle$ for all $y\in L$. Since
$\langle u,q\rangle-\langle u,p\rangle=\lVert u\rVert^2=\delta^2$, the choice
$c=\langle u,\tfrac{p+q}{2}\rangle$ and $\varepsilon=\delta^2/2$ works.
\end{proof}

Thus, for scenes, ``the hulls are disjoint'', ``the margin is positive'', and ``some
hyperplane separates the hulls strictly, with quantified clearance'' are one condition, and
this is the strict reading used for separoids of convex sets in \cite{Strausz03}. Bodies that
merely touch are \emph{not} apart, and their margin is $0$.

\begin{proposition}[The three laws]\label{prop:threelaws}
For every scene $\mathcal C$ over $E$, the relation $\apart_{\mathcal C}$ satisfies:
\begin{itemize}
\item[\textup{(A1)}] \emph{Symmetry:} if $A\apart B$ then $B\apart A$.
\item[\textup{(A2)}] \emph{Bilateral subsumption:} if $A\apart B$, $A'\subseteq A$, and
$B'\subseteq B$, then $A'\apart B'$.
\item[\textup{(A3)}] \emph{Vacuity:} $\emptyset\apart B$ for every $B\subseteq E$.
\end{itemize}
\end{proposition}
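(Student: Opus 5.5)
Each of the three laws follows directly from Definition~\ref{def:scene}; the only tools needed are the symmetry of set intersection and the monotonicity of $\hull_{\mathcal C}$ under inclusion. No separation argument is required, so Lemma~\ref{lem:separation} plays no role here. I would first record the domain condition: if $(A,B)\in\dy{E}$, then $(B,A)\in\dy{E}$, and $(A',B')\in\dy{E}$ for all $A'\subseteq A$, $B'\subseteq B$, since disjointness is symmetric and inherited by subsets. This guarantees that each conclusion is a statement about a pair on which $\apart_{\mathcal C}$ is actually defined.

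For (A1), $A\apart_{\mathcal C}B$ means $\hull_{\mathcal C}(A)\cap\hull_{\mathcal C}(B)=\emptyset$. Intersection is commutative, so this is the same as $\hull_{\mathcal C}(B)\cap\hull_{\mathcal C}(A)=\emptyset$, which is the definition of $B\apart_{\mathcal C}A$.

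For (A2), the key step is monotonicity: $A'\subseteq A$ implies $\hull_{\mathcal C}(A')\subseteq\hull_{\mathcal C}(A)$. If $A'=\emptyset$, this holds by the convention $\hull_{\mathcal C}(\emptyset)=\emptyset$. Otherwise $\bigcup_{a\in A'}C_a\subseteq\bigcup_{a\in A}C_a$, and $\operatorname{conv}$ is inclusion-monotone, because the hull of the larger set is a convex set containing the smaller set and hence contains the hull of the smaller set. Having this, $\hull_{\mathcal C}(A')\cap\hull_{\mathcal C}(B')\subseteq\hull_{\mathcal C}(A)\cap\hull_{\mathcal C}(B)=\emptyset$, which gives $A'\apart_{\mathcal C}B'$.

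For (A3), $\hull_{\mathcal C}(\emptyset)=\emptyset$ by convention, so the intersection with $\hull_{\mathcal C}(B)$ is empty for every $B$; and $(\emptyset,B)\in\dy{E}$ always holds. None of these steps is a real obstacle. The only point needing care is the empty-set convention in the monotonicity step of (A2), since a vacuous $A'$ or $B'$ must be handled by the convention rather than by the convex-hull argument.
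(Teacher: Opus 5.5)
Your proposal is correct and follows the paper's proof: symmetry of intersection for (A1), monotonicity of $\hull_{\mathcal C}$ under inclusion giving nested intersections for (A2), and $\hull_{\mathcal C}(\emptyset)=\emptyset$ for (A3). Your additional checks that each conclusion pair lies in $\dy{E}$ and that the empty-set convention covers a vacuous $A'$ are left implicit in the paper.
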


\begin{proof}
(A1) is the symmetry of intersection. For (A2), $A'\subseteq A$ gives
$\hull(A')\subseteq\hull(A)$ and likewise for $B$, so
$\hull(A')\cap\hull(B')\subseteq\hull(A)\cap\hull(B)=\emptyset$. For (A3),
$\hull(\emptyset)=\emptyset$ meets nothing.
\end{proof}

Note that within $\dy{E}$ the only pair of the form $(A,A)$ is $(\emptyset,\emptyset)$, which
is apart by (A3); this is the quasi-antireflexivity listed for separation relations in
\cite{Strausz03}, here absorbed into the choice of domain. A quantitative refinement of
Proposition~\ref{prop:threelaws}, in which each law becomes a monotonicity statement about
$\marg$, is given in Lemma~\ref{lem:threshold}.

\begin{remark}[Against additivity]\label{rem:proximity}
A proximity relation $\delta$ in the sense of \cite{NW70} satisfies the additivity law
$A\,\delta\,(B\cup C)\iff A\,\delta\,B$ or $A\,\delta\,C$, and the connection relations of
\cite{RCC92} behave likewise on unions. The crossing relation of a scene violates additivity
in the smallest possible configuration: take $d=1$ and singleton bodies $C_u=\{0\}$,
$C_v=\{1\}$, $C_w=\{2\}$. Then $\{v\}\apart\{u\}$ and $\{v\}\apart\{w\}$, yet
$\hull(\{u,w\})=[0,2]\ni 1$, so $\{v\}\cross\{u,w\}$. The culprit is the hull: the relation is
evaluated on $\operatorname{conv}(C_u\cup C_w)$, not on $C_u\cup C_w$. This single failure is
what separates the present theory from proximity and connection calculi, and it is the entire
source of logical content below: were additivity available, every relation would be
determined by its singleton pairs and the language of Section~\ref{sec:calculus} would
collapse.
\end{remark}

\begin{example}[A running scene]\label{ex:running}
Let $E=\{a,b,c\}$ and take, in $\R^2$,
\[
C_c=[-1,-\tfrac14]\times[0,1],\qquad
C_a=[0,1]\times[0,1],\qquad
C_b=[4,5]\times[0,1].
\]
All three bodies are pairwise apart: vertical lines $x=-\tfrac18$ and $x=\tfrac52$ provide
the clearance. Moreover $\{b\}\apart\{a,c\}$, witnessed by $x=\tfrac52$ again, and
$\{c\}\apart\{a,b\}$, witnessed by $x=-\tfrac18$. The remaining bilateral pair behaves
differently: $\hull(\{b,c\})=[-1,5]\times[0,1]\supseteq C_a$, so $\{a\}\cross\{b,c\}$, even
though $a$ is apart from $b$ and from $c$ separately. Figure~\ref{fig:running} shows the
scene; the joint hull of $C_b$ and $C_c$ simply sweeps over $C_a$. The crossing pair
$(\{a\},\{b,c\})$ is minimal: every pair strictly below it in $\sqsubseteq$ is apart. The
maximal bilateral separations are $(\{b\},\{a,c\})$ and $(\{c\},\{a,b\})$, and every
separation of the scene is dominated by one of them or is vacuous. The same abstract relation
is realized on a line by the points $-\tfrac12,\ \tfrac12,\ \tfrac92$ for $c,a,b$: it is the
separation structure of three collinear points, one of the eight isomorphism types of acyclic
separoids of order three catalogued in \cite{Strausz03}. A scene is remembered by its
relation; the relation, as Lemma~\ref{lem:generation} will show, is remembered by its maximal
separations and minimal Radon partitions; everything between them is reconstruction.
\end{example}

\begin{figure}[t]
\centering
\begin{tikzpicture}[scale=1.05]
\fill[black!8] (-1,0) rectangle (5,1);
\draw[black!35,thin] (-1,0) rectangle (5,1);
\draw[thick] (-1,0) rectangle (-0.25,1);
\draw[thick] (0,0) rectangle (1,1);
\draw[thick] (4,0) rectangle (5,1);
\node at (-0.625,0.5) {$C_c$};
\node at (0.5,0.5) {$C_a$};
\node at (4.5,0.5) {$C_b$};
\draw[dashed] (2.5,-0.45) -- (2.5,1.45);
\node[anchor=south] at (2.5,1.45) {\small $x=\tfrac52$};
\draw[dashed] (-0.125,-0.45) -- (-0.125,1.45);
\node[anchor=south] at (-0.45,1.45) {\small $x=-\tfrac18$};
\node[anchor=west] at (5.15,0.18) {\small $\hull(\{b,c\})$};
\end{tikzpicture}
\caption{The running scene of Example~\ref{ex:running}. The shaded region is the joint hull
$\hull(\{b,c\})$, which swallows $C_a$: hence $\{a\}\cross\{b,c\}$ although $a$ is apart from
each of $b,c$ on its own. The dashed lines witness the two maximal bilateral separations.}
\label{fig:running}
\end{figure}

\section{Apartness separoids}\label{sec:separoids}

We now take the conclusions of Proposition~\ref{prop:threelaws} as axioms.

\begin{definition}[Apartness separoid]\label{def:as}
An \emph{apartness separoid} over $E$ is a pair $\Sigma=(E,\apart)$ with
$\apart\subseteq\dy{E}$ satisfying (A1)--(A3) of Proposition~\ref{prop:threelaws}. Its
\emph{crossing relation} is $\cross_\Sigma=\dy{E}\setminus\apart$; crossing pairs are called
the \emph{Radon partitions} of $\Sigma$, and the support of a Radon partition $(A,B)$ is
$\supp(A,B)=A\cup B$. A Radon partition is \emph{minimal} if it is $\sqsubseteq$-minimal
among Radon partitions. A bilateral pair in $\apart$ is a \emph{bilateral separation}; it is
\emph{maximal} if it is $\sqsubseteq$-maximal among bilateral separations. By (A1), both
notions are invariant under swapping components, and we count maximal separations and minimal
Radon partitions as unordered pairs $\{A,B\}$.
\end{definition}

By (A3) and (A1), every Radon partition is bilateral. Since $E$ is finite, every Radon
partition dominates a minimal one and every bilateral separation is dominated by a maximal
one.

The structures of Definition~\ref{def:as} are exactly the acyclic separoids of the
literature, presented from the other side of the mirror. Recall from \cite{Strausz03} that a
\emph{separoid} is a relation $\dagger\subseteq 2^S\times 2^S$ such that, for all
$A,B\subseteq S$:
$(\circ)$ $A\dagger B\Rightarrow B\dagger A$;
$(\circ\circ)$ $A\dagger B\Rightarrow A\cap B=\emptyset$; and
$(\circ\circ\circ)$ $A\dagger B$ and $C\subseteq S\setminus A$ imply $A\dagger(B\cup C)$.
The separoid is \emph{acyclic} when $\emptyset$ is separated from $S$, that is,
$\neg(\emptyset\dagger S)$.

\begin{proposition}[Cryptomorphy]\label{prop:crypto}
The assignment $\apart\;\longmapsto\;\cross=\dy{E}\setminus\apart$ is a bijection between
apartness separoids over $E$ and acyclic separoids on $E$; its inverse is
$\dagger\;\longmapsto\;\dy{E}\setminus\dagger$. The two axiom systems are therefore
cryptomorphic in the sense in which the word is used for matroid axiomatics \cite{BLVSWZ}.
\end{proposition}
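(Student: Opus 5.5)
The plan is to verify directly that complementation inside $\dy{E}$ carries each axiom system to the other. Then I will observe that the two assignments are mutually inverse because both relations live inside $\dy{E}$. The first thing to record is that every separoid $\dagger$ on $E$ satisfies $\dagger\subseteq\dy{E}$ by $(\circ\circ)$. Hence $\dy{E}\setminus\dagger$ is a relation on $\dy{E}$. Moreover $\dy{E}\setminus(\dy{E}\setminus R)=R$ for every $R\subseteq\dy{E}$, which gives the bijection once both directions are shown to be well defined.

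\emph{From apartness to separoid.} Let $\apart$ satisfy (A1)--(A3) and put $\cross=\dy{E}\setminus\apart$. I check the four separoid conditions in turn.
\begin{itemize}
\item $(\circ\circ)$ holds by construction.
\item $(\circ)$ is the contrapositive of (A1): if $B\apart A$ then $A\apart B$, so $A\cross B$ forces $B\cross A$.
\item For $(\circ\circ\circ)$, suppose $A\cross B$ and $C\subseteq E\setminus A$. Then $(A,B\cup C)$ is a disjoint pair. If it were apart, (A2) applied to $B\subseteq B\cup C$ would give $A\apart B$, a contradiction; hence $A\cross(B\cup C)$.
\item Acyclicity, i.e.\ $\neg(\emptyset\cross E)$, is exactly the instance $\emptyset\apart E$ of (A3).
\end{itemize}

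\emph{From acyclic separoid to apartness.} Let $\dagger$ be an acyclic separoid and put $\apart=\dy{E}\setminus\dagger$. I check (A1)--(A3).
\begin{itemize}
\item (A1) is the contrapositive of $(\circ)$, as before.
\item For (A3), suppose $\emptyset\dagger B$ for some $B$. Then $(\circ\circ\circ)$ with $C=E\setminus B\subseteq E\setminus\emptyset$ yields $\emptyset\dagger E$, contradicting acyclicity. Hence $\emptyset\apart B$.
\item For (A2), I argue contrapositively. Let $(A,B)\in\dy{E}$, $A'\subseteq A$, $B'\subseteq B$, and suppose $A'\dagger B'$; I must show $A\dagger B$.
\end{itemize}
Since $(\circ\circ\circ)$ only enlarges the second component, the argument for (A2) uses symmetry to enlarge each side in turn. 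First, $B\setminus B'\subseteq E\setminus A\subseteq E\setminus A'$ because $A\cap B=\emptyset$. So $(\circ\circ\circ)$ gives $A'\dagger B$, and $(\circ)$ gives $B\dagger A'$. Next, $A\setminus A'\subseteq E\setminus B$, so $(\circ\circ\circ)$ gives $B\dagger A$, and $(\circ)$ gives $A\dagger B$.

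The only step needing care is this two-stage enlargement in (A2), since the separoid monotonicity axiom is one-sided and carries the side condition $C\subseteq S\setminus A$. Disjointness of $(A,B)$ is exactly what makes both applications legitimate. Everything else is a contrapositive.
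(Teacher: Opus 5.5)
Your proof is correct and follows the paper's argument essentially step for step. The common moves are: complementation within $\dy{E}$; contrapositives linking $(\circ)$ with (A1) and $(\circ\circ\circ)$ with (A2); acyclicity read as (A3) at $B=E$; and the same two-stage enlargement for (A2), applying $(\circ\circ\circ)$ with $(\circ)$ interleaved and using the disjointness of $(A,B)$ for both side conditions. Your explicit observation that $(\circ\circ)$ forces $\dagger\subseteq\dy{E}$, which is what makes complementation involutive, spells out what the paper compresses into its closing sentence about mutually inverse complementations.
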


\begin{proof}
Let $\apart$ satisfy (A1)--(A3) and put $\dagger=\dy{E}\setminus\apart$, regarded as a
relation on $2^E\times 2^E$ that holds only on disjoint pairs; then $(\circ\circ)$ holds by
fiat and $(\circ)$ follows from (A1). For $(\circ\circ\circ)$, suppose $A\dagger B$ and
$C\subseteq E\setminus A$. The pair $(A,B\cup C)$ lies in $\dy{E}$, and
$(A,B)\sqsubseteq(A,B\cup C)$; if $(A,B\cup C)$ were in $\apart$ then so would $(A,B)$ be, by
(A2), contradicting $A\dagger B$. Hence $A\dagger(B\cup C)$. Acyclicity is (A3) at $B=E$.

Conversely let $\dagger$ be an acyclic separoid and put
$\apart=\dy{E}\setminus\dagger$. (A1) follows from $(\circ)$. For (A2), suppose
$(A,B)\in\apart$, $A'\subseteq A$, $B'\subseteq B$, and, towards a contradiction,
$A'\dagger B'$. Two applications of $(\circ\circ\circ)$ interleaved with $(\circ)$ climb back
up: from $A'\dagger B'$ and $B\setminus B'\subseteq E\setminus A'$ we get $A'\dagger B$; from
$B\dagger A'$ and $A\setminus A'\subseteq E\setminus B$ we get $B\dagger A$, that is,
$A\dagger B$, a contradiction. For (A3), if $\emptyset\dagger B$ for some $B$, then
$(\circ\circ\circ)$ with $C=E\setminus B$ gives $\emptyset\dagger E$, contradicting
acyclicity. The two assignments are mutually inverse complementations.
\end{proof}

In the separoid literature the separation relation is written $A\sep B$ and the structure is
denoted $(S,\sep)$ or $(S,\dagger)$ interchangeably \cite{Strausz03}; we keep the symbol
$\apart$ and the apartness reading, after the constructive usage of \cite{BV11}, because the
development below treats the relation as the primitive and never reconstitutes points.

\begin{remark}[Scope of the borrowing]\label{rem:classical}
The borrowing from \cite{BV11} is terminological. The object theory here is finite and the
metatheory classical; nothing below depends on, or contributes to, constructive apartness
spaces. The word is kept because it names the primitive accurately: the relation asserts
positive, witnessed separation (Lemma~\ref{lem:separation} supplies the witness with a
margin), not the mere negation of contact.
\end{remark}

\begin{definition}[Closure]\label{def:closure}
For $\Gamma\subseteq\dy{E}$ define
\begin{multline*}
\cl(\Gamma)=\bigl\{(A,B)\in\dy{E} \;:\; A=\emptyset\ \text{or}\ B=\emptyset\ \text{or}\\
\exists (A',B')\in\Gamma\ \bigl[(A,B)\sqsubseteq(A',B')\ \text{or}\
(A,B)\sqsubseteq(B',A')\bigr]\bigr\}.
\end{multline*}
\end{definition}

\begin{lemma}[Generation]\label{lem:generation}
For every $\Gamma\subseteq\dy{E}$, the relation $\cl(\Gamma)$ is the least apartness separoid
containing $\Gamma$. Moreover, every apartness separoid $\Sigma$ satisfies
$\apart_\Sigma=\cl(M_\Sigma)$, where $M_\Sigma$ is the set of maximal bilateral separations
of $\Sigma$ (one ordered representative per unordered pair suffices); dually, $\Sigma$ is
determined by its minimal Radon partitions, as observed in \cite{Strausz03}.
\end{lemma}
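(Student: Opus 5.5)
The plan is to verify the three parts in order: that $\cl(\Gamma)$ is an apartness separoid, that it is the least one containing $\Gamma$, and then the two reconstruction statements, which follow from leastness together with finiteness.

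First, $\cl(\Gamma)$ contains $\Gamma$. Every $(A',B')\in\Gamma$ dominates itself, so it satisfies the existential clause of Definition~\ref{def:closure}.

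Next I check (A1)--(A3) for $\cl(\Gamma)$.
\begin{itemize}
\item (A3) holds by the first disjunct of Definition~\ref{def:closure}.
\item For (A1), suppose $(A,B)\in\cl(\Gamma)$. If the pair is vacuous, so is $(B,A)$. Otherwise $(A,B)\sqsubseteq(A',B')$ or $(A,B)\sqsubseteq(B',A')$ for some $(A',B')\in\Gamma$. Swapping components turns these into $(B,A)\sqsubseteq(B',A')$ or $(B,A)\sqsubseteq(A',B')$ respectively. The definition of $\cl$ is symmetric under exactly this exchange, so $(B,A)\in\cl(\Gamma)$.
\item For (A2), let $(A,B)\in\cl(\Gamma)$ and $(A_0,B_0)\sqsubseteq(A,B)$. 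If $(A,B)$ is vacuous, then one of $A_0,B_0$ is empty and we are done. Otherwise $(A_0,B_0)$ is either vacuous or inherits the same witness by transitivity of $\sqsubseteq$.
\end{itemize}

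For minimality, let $\apart$ be any apartness separoid containing $\Gamma$ and take $(A,B)\in\cl(\Gamma)$.
\begin{itemize}
\item If $A=\emptyset$, then $(A,B)\in\apart$ by (A3).
\item If $B=\emptyset$, use (A3) and then (A1).
\item If $(A,B)\sqsubseteq(A',B')\in\Gamma\subseteq\apart$, apply (A2).
\item If $(A,B)\sqsubseteq(B',A')$, first get $(B',A')\in\apart$ from (A1), then apply (A2).
\end{itemize}
Hence $\cl(\Gamma)\subseteq\apart$.

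For $\apart_\Sigma=\cl(M_\Sigma)$, the inclusion $\supseteq$ follows from minimality, since $M_\Sigma\subseteq\apart_\Sigma$. For $\subseteq$, take $(A,B)\in\apart_\Sigma$.
\begin{itemize}
\item If it is vacuous, it lies in $\cl(M_\Sigma)$ directly.
\item If it is bilateral, finiteness of $\dy{E}$ gives a $\sqsubseteq$-maximal bilateral separation dominating it, as remarked after Definition~\ref{def:as}.
\end{itemize}
If $M_\Sigma$ keeps only one ordered representative per unordered pair, the dominating maximal pair may be the swapped representative. The swap disjunct in Definition~\ref{def:closure} absorbs exactly this case, which is why one representative suffices.

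For the dual statement, a pair is a Radon partition of $\Sigma$ iff it dominates some minimal Radon partition. One direction uses finiteness. The other is the contrapositive of (A2): a pair above a crossing pair cannot be apart. So $\cross_\Sigma$, and hence $\apart_\Sigma=\dy{E}\setminus\cross_\Sigma$, is determined by the minimal Radon partitions.

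There is no real obstacle here. The only point needing care is the bookkeeping of the swap clause, both in the (A1) check and in the one-representative convention.
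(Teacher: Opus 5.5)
Your proof is correct and follows the paper's argument essentially step for step. You verify containment and (A1)--(A3) for $\cl(\Gamma)$, derive leastness from the three laws, and obtain $\apart_\Sigma=\cl(M_\Sigma)$ from leastness plus finiteness; the only differences are cosmetic: you spell out the (A3)-then-(A1) step when $B=\emptyset$, and you argue the dual statement directly via upward closure of minimal Radon partitions instead of citing the cryptomorphy of Proposition~\ref{prop:crypto}.
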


\begin{proof}
$\cl(\Gamma)$ satisfies (A3) by the vacuity clause and (A1) because the defining disjunction
is invariant under swapping $(A,B)$, the two domination cases exchanging places. For (A2), if
$(A,B)\in\cl(\Gamma)$ via a witness $(A',B')$ and $(A'',B'')\sqsubseteq(A,B)$, the same
witness dominates $(A'',B'')$ by transitivity of inclusion; vacuous pairs stay vacuous under
$\sqsubseteq$-descent. Containment $\Gamma\subseteq\cl(\Gamma)$ is the case
$(A',B')=(A,B)$. If $\Sigma$ is any apartness separoid with
$\Gamma\subseteq\apart_\Sigma$, then each $(A,B)\in\cl(\Gamma)$ lies in $\apart_\Sigma$:
vacuous pairs by (A3), dominated pairs by (A2), swapped dominations additionally by (A1).
Hence $\cl(\Gamma)$ is least.

For the second claim, $\cl(M_\Sigma)\subseteq\apart_\Sigma$ by leastness, and conversely
every vacuous pair is in $\cl(M_\Sigma)$ while every bilateral separation is dominated by a
maximal one. Determination by minimal Radon partitions is the complementary statement under
Proposition~\ref{prop:crypto}.
\end{proof}

\begin{definition}[Finite presentations]\label{def:presentations}
Let $N_E=|\dy{E}|=3^{|E|}$. We use four finite access models.
\begin{itemize}
\item A \emph{full table} lists the truth value of $A\apart B$ for all $(A,B)\in\dy{E}$.
\item A \emph{positive generator presentation} is a finite list $G\subseteq\dy{E}$ and
represents the separoid $\cl(G)$.
\item A \emph{minimal-Radon presentation} is a finite symmetric list $R_0$ of bilateral
pairs and represents the structure whose crossing relation is the upward closure of $R_0$:
$(A,B)$ crosses exactly when some member of $R_0$, possibly swapped, is dominated by
$(A,B)$.
\item A \emph{membership oracle} answers whether a queried disjoint pair is apart.
\end{itemize}
The second and third presentations are exact only when the relation they define satisfies
(A1)--(A3); this condition is checkable by the enumeration bounds below.
\end{definition}

\begin{proposition}[Enumeration and bit complexity]\label{prop:effective}
Let $n=|E|$ and $N_E=3^n$. From any of the access models in
Definition~\ref{def:presentations} one can enumerate the maximal bilateral separations
$M_\Sigma$ and the minimal Radon partitions $R_\Sigma$. More precisely:
\begin{itemize}
\item from a full table, or from a membership oracle, $O(N_E^2 n)$ set-comparison work and
$O(N_E)$ table entries or oracle calls suffice after the disjoint pairs have been listed;
\item from a positive generator presentation $G$, membership in $\cl(G)$ costs
$O(|G|n)$, and enumeration costs $O(N_E^2 n+N_E|G|n)$;
\item from a minimal-Radon presentation $R_0$, membership in the represented relation costs
$O(|R_0|n)$, and enumeration costs $O(N_E^2 n+N_E|R_0|n)$;
\item once $M_\Sigma$ and $R_\Sigma$ are known, Theorem~\ref{thm:representation} writes at
most $n(1+|R_\Sigma|)$ generating points in $|M_\Sigma|$ coordinates, each rational having
numerator and denominator of $O(\log n)$ bits.
\end{itemize}
The construction is therefore exponential in $n$ only through the number of possible
disjoint pairs and through its own output dimension; no additional arithmetic blowup is
hidden in the word \emph{effective}.
\end{proposition}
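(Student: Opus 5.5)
The plan is to reduce every access model to a full table of $\apart$ on $\dy{E}$, and then extract the maximal and minimal objects from that table by pairwise comparison. First list $\dy{E}$ by assigning each site one of three labels (in $A$, in $B$, neither). This gives $N_E=3^n$ pairs, each stored as two $n$-bit masks, so a test $(A,B)\sqsubseteq(A',B')$ costs $O(n)$.

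For a membership oracle, one call per pair produces the table with $O(N_E)$ calls. For a positive generator presentation $G$, membership of $(A,B)$ in $\cl(G)$ is read off Definition~\ref{def:closure}: check vacuity, then scan each $(A',B')\in G$ and test the two dominations. This is $O(|G|n)$ per pair and $O(N_E|G|n)$ for the whole table. The minimal-Radon presentation is handled the same way: $(A,B)$ crosses iff some member of $R_0$, possibly swapped, is dominated by $(A,B)$, which costs $O(|R_0|n)$ per pair. From the table, (A1)--(A3) can be checked within the stated bounds. (A1) and (A3) are direct scans. For (A2), it suffices to test covering relations, namely removing one element from one side. That is $O(N_E n)$ with constant-time table indexing, or at worst $O(N_E^2 n)$ by brute force, so the caveat in Definition~\ref{def:presentations} is discharged.

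Given the table, $M_\Sigma$ is the set of bilateral apart pairs $p$ such that no other bilateral apart pair strictly dominates $p$. Likewise $R_\Sigma$ is the set of crossing pairs $p$ such that no other crossing pair is strictly dominated by $p$. A naive double loop over $\dy{E}$ with $O(n)$ comparisons gives $O(N_E^2 n)$. Correctness is immediate from Definition~\ref{def:as}. By (A1) the result is closed under swapping, and we keep one representative per unordered pair by comparing masks lexicographically. Adding the table-construction costs gives the three stated bounds.

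The last bullet is the only place where the representation matters, and it is where I expect the main difficulty. Since the construction of Theorem~\ref{thm:representation} is not yet in view, I would argue from its announced shape. There is one coordinate per maximal separation $\{A,B\}\in M_\Sigma$, and each site $e$ receives one base point plus one witness point per minimal Radon partition, giving at most $n(1+|R_\Sigma|)$ generators. The key claim is that every coordinate value is a rational whose numerator and denominator are bounded by a polynomial in $n$, for instance values in $\{0,\pm 1,\pm 2\}$ or denominators at most $|A\cup B|\le n$ arising from barycentric witness points. Any integer of size polynomial in $n$ has $O(\log n)$ bits. The obstacle is confirming that the witness points, presumably averages over the support of a minimal Radon partition, never compound denominators across coordinates. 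I would verify this by noting that each coordinate is computed independently from one maximal separation and one support set, so no products of denominators arise.
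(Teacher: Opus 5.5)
You handle the first three bullets exactly as the paper does. You list $\dy{E}$ by the three-state labelling and obtain each truth value by one read or query, by the vacuity-or-domination test of Definition~\ref{def:closure}, or by the dual upward-closure test for $R_0$. You then run the $O(N_E^2)$ dominance scan with $O(n)$ subset tests on bit masks. Your extra check of (A1)--(A3) via single-element removals is correct and within budget. For the generator and minimal-Radon presentations it can never fail, though: $\cl(G)$ is an apartness separoid by Lemma~\ref{lem:generation}, and the complement of the symmetric upward closure of bilateral pairs is symmetric, downward closed, and contains every vacuous pair.

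The last bullet is where your proposal has a real gap. You leave the height bound as an unverified ``key claim'', and the verification you offer does not establish it. Independence of coordinates only shows that heights cannot compound across coordinates. It gives no bound on the values chosen for a single pair $(i,P)$, and an arbitrary feasible solution of the side constraints plus the witness equation can have arbitrarily large height. Your picture of where denominators arise is also off:
the witness $w_P$ is not itself a generating point and has integer coordinates $\tau_{i,P}\in\{-1,0,1\}$; the denominators live in the anchors $q_{P,e}$; and values are not confined to $\{0,\pm1,\pm2\}$ (Example~\ref{ex:construction} already uses $3$, and in general they reach $2n+1$).

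No reconstruction is needed, because the paper simply reads the bullet off the statement of Theorem~\ref{thm:representation}. Body $C_e$ has $1+\#\{P\in R_\Sigma:e\in\supp(P)\}\le 1+|R_\Sigma|$ generators, so there are at most $n(1+|R_\Sigma|)$ in total. Every coordinate is $p/q$ with $|p|\le n(2n+1)$ and $1\le q\le n$, both of $O(\log n)$ bits. If you want the mechanism behind those bounds, it is Lemma~\ref{lem:onecoordinate}. The common average is pinned to an integer target $\tau_{i,P}\in\{-1,0,1\}$, and constrained entries are set to $\pm1$ except for one class sharing a single unknown. That unknown is solved from one linear equation, with denominator $f$, $r$, or $s$ (each at most $n$) and absolute value at most $2n+1$.
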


\begin{proof}
Enumerate a disjoint pair by assigning each element of $E$ one of three states: left, right,
or absent. This gives $N_E=3^n$ ordered pairs. A bilateral separation is maximal precisely
when it is apart, both sides are nonempty, and no strictly larger bilateral disjoint pair is
apart. A Radon partition is minimal precisely when it is crossing and no strictly smaller
bilateral disjoint pair is crossing. Testing these two conditions by pairwise comparison
uses $O(N_E^2)$ dominance tests, each implemented by subset tests on bit vectors of length
$n$.

For a full table or oracle, the truth value of each pair is read or queried once and then
cached. For a positive generator presentation, membership is exactly the domination test in
Definition~\ref{def:closure}: vacuity, or domination by a generator, possibly after swap. For
a minimal-Radon presentation, a nonvacuous pair is crossing exactly when it dominates one of
the listed pairs, possibly after swap. The same maximal/minimal scan then applies. The point
count and bit bound are the statement of Theorem~\ref{thm:representation}: every body has one
base point and at most one anchor for each minimal Radon partition containing its site, and
all scalar assignments have denominators at most $n$ and numerators bounded by $n(2n+1)$.
\end{proof}

\begin{remark}[Orientation]\label{rem:orientation}
The unordered maximal separations and minimal Radon partitions can be oriented canonically,
for instance by lexicographic order on the two bit vectors that encode their components. A
different orientation of a maximal separation only reflects the corresponding coordinate;
a different orientation of a minimal Radon partition only swaps the two averages in the same
witness equation. The bit heights and margins of Theorem~\ref{thm:representation} are
unchanged.
\end{remark}

\begin{definition}[The language and the consequence relations]\label{def:language}
The language $\Lang_E$ has one propositional atom $\atomf{A}{B}$ for each
$(A,B)\in\dy{E}$ and is closed under $\neg,\wedge,\vee,\rightarrow$; $\Atoms_E$ is the set of
atoms. An apartness separoid $\Sigma$ evaluates atoms by
$\Sigma\models\atomf{A}{B}\iff(A,B)\in\apart_\Sigma$, and Boolean connectives classically; a
scene $\mathcal C$ evaluates atoms through $\apart_{\mathcal C}$. For
$\Gamma\cup\{\varphi\}\subseteq\Lang_E$ write $\Gamma\modsep\varphi$ when every apartness
separoid over $E$ satisfying $\Gamma$ satisfies $\varphi$, and $\Gamma\modgeo\varphi$ when
every scene over $E$, of every dimension, satisfying $\Gamma$ satisfies $\varphi$. For
$d\geq 0$, $\Gamma\models_d\varphi$ restricts the scenes to ambient space $\R^d$.
\end{definition}

By Proposition~\ref{prop:threelaws}, every scene-induced relation is an apartness separoid;
hence $\modsep\,\subseteq\,\modgeo$, with equality established in
Corollary~\ref{cor:invariance}.

\section{The effective representation theorem}\label{sec:representation}

\begin{definition}\label{def:gd}
An apartness separoid $\Sigma$ is \emph{realized} by a scene $\mathcal C$ when
$\apart_{\mathcal C}=\apart_\Sigma$. The \emph{geometric dimension} $\gd(\Sigma)$ is the
least $d$ such that some scene in $\R^d$ realizes $\Sigma$; the terminology follows
\cite{Strausz03}.
\end{definition}

\subsection{The construction}

\begin{lemma}[One-coordinate averaging]\label{lem:onecoordinate}
Fix a nonempty finite set $X$ of size at most $n$. Suppose each $x\in X$ is assigned one of
three side types: negative, positive, or free. A negative value must lie in $(-\infty,-1]$,
a positive value in $[1,\infty)$, and a free value in $\R$. Let $I(X)$ be the set of averages
of admissible assignments on $X$. Then
\[
I(X)=(-\infty,-1]\quad\text{if all members are negative},
\]
\[
I(X)=[1,\infty)\quad\text{if all members are positive},
\]
and $I(X)=\R$ in every remaining case. Moreover, for every target
$\tau\in\{-1,0,1\}\cap I(X)$ there is an admissible assignment with denominators at most
$n$ and absolute values at most $2n+1$.
\end{lemma}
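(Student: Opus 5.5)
The plan is to compute $I(X)$ directly, using the fact that the average is a linear function of the assignment. Then, for the quantitative clause, I will write down explicit integer assignments in which only one member is moved off the boundary value $\pm1$ or $0$. Write $k=|X|\le n$, and let $p$, $q$, $f$ be the numbers of negative, positive and free members, so $p+q+f=k$.

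\textbf{Step 1: identifying $I(X)$.} The set of admissible assignments is a product of closed intervals, hence convex. Its image under the averaging map is therefore an interval, and I only need its endpoints.
\begin{itemize}
\item If every member is negative, every average is at most $-1$. The constant assignment $t$ attains every $t\le-1$, so $I(X)=(-\infty,-1]$. The all-positive case is symmetric and gives $I(X)=[1,\infty)$.
\item If some member is free, letting that single value run over $\R$, with the others held fixed, already sweeps out all of $\R$.
\item Otherwise both a negative and a positive member are present. Sending a negative value to $-\infty$ makes the average unbounded below, and sending a positive value to $+\infty$ makes it unbounded above. By convexity, $I(X)=\R$.
\end{itemize}

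\textbf{Step 2: explicit assignments for $\tau\in\{-1,0,1\}\cap I(X)$.} I will aim for integer assignments, so denominators are $1\le n$, with the required sum equal to $k\tau$.
\begin{itemize}
\item \emph{One-sided cases.} Here $\tau$ can only be $-1$ (all negative) or $1$ (all positive). The constant assignment $\tau$ works.
\item \emph{Some member free.} Put negatives at $-1$, positives at $1$, all free members but one at $0$, and the last free member at $k\tau+p-q$. This value is at most $2k$ in absolute value.
\item \emph{No free member, both signs present.} Put all members at their boundary values $\pm1$ except one adjuster. If $k\tau+p-q\le 0$, choose a negative adjuster with value $v=k\tau+p-q-1\le-1$. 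Otherwise $k\tau+p-q\ge1$, since all quantities are integers; then choose a positive adjuster with value $w=k\tau+p-q+1\ge2$.
\end{itemize}
In each case the values are admissible, the sum is $k\tau$, and every absolute value is at most $k+k+1\le 2n+1$.

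\textbf{Main obstacle.} There is no genuine difficulty. The one point needing care is the last case, where the adjuster must stay on its admissible side. Choosing the adjuster's sign according to the sign of $k\tau+p-q$, and using integrality to get the strict gap, handles this, and also explains where the bound $2n+1$ comes from.
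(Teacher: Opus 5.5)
Your proof is correct and is essentially the paper's argument: the same three-way case split (one-sided, some free member, both signs without free members), with constrained members placed at their boundary values $\pm1$ and the remaining freedom used to hit the sum $k\tau$. The differences are minor. You get $I(X)=\R$ from convexity and unboundedness rather than from explicit solutions valid for every real target. You also put the whole correction on a single member, which gives integer values, whereas the paper spreads it evenly over all free members or over one sign class and so gets denominators up to $n$. Incidentally, integrality is not needed in your last case, since $k\tau+p-q>0$ already gives $w>1$.
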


\begin{proof}
The pure cases are immediate: averaging numbers bounded above by $-1$ gives an average
bounded above by $-1$, and setting all values to the target realizes every target in that
ray; the positive case is symmetric.

Assume $X$ is not pure. If a free member exists, set all negative constrained values to
$-1$, all positive constrained values to $1$, and all free values equal to a scalar $z$. If
there are $r$ negative, $s$ positive, and $f\ge1$ free members, the equation
\[
\frac{-r+s+fz}{|X|}=\tau
\]
has the solution $z=(\tau |X|+r-s)/f$. For $\tau\in\{-1,0,1\}$ the numerator has absolute
value at most $2|X|$, the denominator is at most $|X|$, and the bound $|z|\le 2n$ follows.

It remains to treat the mixed case without free members. Then $r,s\ge1$ and $r+s=|X|$.
Set all negative values to $-1-v$ and all positive values to $1+u$, with $u,v\ge0$. The
average is $\tau$ exactly when
\[
su-rv=\tau |X|+r-s .
\]
If the right side is nonnegative, take $u=(\tau |X|+r-s)/s$ and $v=0$; otherwise take
$u=0$ and $v=(s-r-\tau |X|)/r$. This realizes every real target, hence $I(X)=\R$; for
$\tau\in\{-1,0,1\}$ the same numerator bound gives denominators at most $n$ and absolute
values at most $2n+1$.
\end{proof}

\begin{theorem}[Effective representation]\label{thm:representation}
Let $\Sigma=(E,\apart)$ be an apartness separoid, $n=|E|$. Let $M=\{M_1,\dots,M_k\}$ be its
maximal bilateral separations and $R=\{P_1,\dots,P_m\}$ its minimal Radon partitions. Then
$\Sigma$ is realized in $\R^k$ by a scene of polytopes in which:
\begin{itemize}
\item each body $C_e$ is the convex hull of $1+\#\{P\in R: e\in\supp(P)\}$ explicitly given
points;
\item every coordinate of every point is a rational number $p/q$ with $|p|\le n(2n+1)$ and
$1\le q\le n$;
\item every bilateral separation of $\Sigma$ holds in the scene with margin at least $2$:
$\marg_{\mathcal C}(A,B)\ge 2$ whenever $(A,B)\in\apart$ is bilateral.
\end{itemize}
In particular $\gd(\Sigma)\le k$, and every apartness separoid is realizable by rational
polytopes.
\end{theorem}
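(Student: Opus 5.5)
The plan is to build the scene directly in $\R^k$, one coordinate per maximal separation, and to make every point of every body obey a sign discipline that forces all separations at once. Orient each maximal separation as $M_j=(A_j,B_j)$ as in Remark~\ref{rem:orientation}. Call a point $x\in\R^k$ \emph{admissible for $e$} if, for every $j$, we have $x_j\le -1$ whenever $e\in A_j$ and $x_j\ge 1$ whenever $e\in B_j$; coordinates $j$ with $e\notin A_j\cup B_j$ are unconstrained. This is exactly the side typing (negative, positive, free) of Lemma~\ref{lem:onecoordinate}, applied separately in each coordinate.

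\emph{Step 1 (base points).} Each site $e$ gets a base point $b_e$ with $(b_e)_j=-1$ if $e\in A_j$, $+1$ if $e\in B_j$, and $0$ otherwise. It is admissible for $e$.

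\emph{Step 2 (anchors and witness points).} For each minimal Radon partition $P=(A,B)$ I choose a witness point $w_P\in\R^k$ together with anchors $a_{e,P}$, admissible for $e$, for every $e\in A\cup B$. The anchors are chosen so that both $\frac1{|A|}\sum_{e\in A}a_{e,P}$ and $\frac1{|B|}\sum_{e\in B}a_{e,P}$ equal $w_P$.

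This is done coordinate by coordinate. In coordinate $j$ we need a target $\tau_j\in I(A)\cap I(B)\cap\{-1,0,1\}$, where $I(\cdot)$ is computed from the side types induced by $M_j$. This compatibility is the only genuinely combinatorial step, and I expect it to be the main obstacle. From the three possible shapes of $I(\cdot)$ in Lemma~\ref{lem:onecoordinate}, the intersection can be empty only when one side is purely negative and the other purely positive. That means either $A\subseteq A_j$ and $B\subseteq B_j$, or the swap, so $P\sqsubseteq M_j$ up to swap. By (A2) this would make $P$ apart, contradicting that $P$ is a Radon partition. Otherwise the intersection is $\R$ or one of the rays $(-\infty,-1]$, $[1,\infty)$, and so contains one of $-1,0,1$. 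Lemma~\ref{lem:onecoordinate} then supplies the coordinate values, with denominators at most $n$ and absolute values at most $2n+1$, hence numerators at most $n(2n+1)$. Base points trivially satisfy these bounds.

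\emph{Step 3 (the bodies).} Put $C_e=\operatorname{conv}\bigl(\{b_e\}\cup\{a_{e,P}: P\in R,\ e\in\supp(P)\}\bigr)$. This gives the stated point count, and the coordinate bounds come from Step 2.

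\emph{Step 4 (verification).} Since admissibility is defined by closed half-space conditions, it is preserved under convex combinations, so every point of $C_e$ is admissible for $e$.

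For crossings: if $P=(A,B)$ is a minimal Radon partition, then $w_P$ is an average of points of the $C_a$, $a\in A$, and also of the $C_b$, $b\in B$. Hence $w_P\in\hull(A)\cap\hull(B)$. Every Radon partition dominates a minimal one, and $\hull$ is monotone, so every crossing pair of $\Sigma$ crosses in the scene.

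For separations: a vacuous pair is apart trivially. A bilateral separation $(A,B)$ is dominated by some $M_j$, possibly after a swap; say $A\subseteq A_j$ and $B\subseteq B_j$. Then every point of $\hull(A)$ has $x_j\le -1$ and every point of $\hull(B)$ has $x_j\ge1$, since half-spaces are convex. Projecting to the $j$-th coordinate gives $\marg_{\mathcal C}(A,B)\ge2$.

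Thus $\apart_{\mathcal C}=\apart_\Sigma$, and in particular $\gd(\Sigma)\le k$. In the degenerate case $k=0$ every bilateral pair crosses, and the scene in $\R^0$ consisting of the single point realizes $\Sigma$; the construction above covers this automatically.
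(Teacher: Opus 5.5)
Your proposal is correct and follows the paper's proof essentially step for step. It uses one axis per maximal separation with side constraints, base points with coordinates in $\{-1,0,1\}$, and anchors per minimal Radon partition chosen coordinatewise via Lemma~\ref{lem:onecoordinate}; the existence of a common target is exactly the observation that a Radon partition cannot be dominated by a separation, margins come from axis half-spaces, and crossings come from the common average $w_P$. The only cosmetic difference is that you absorb the case $k=0$ into the general construction, which is legitimate with $\R^0=\{0\}$, whereas the paper treats it separately.
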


\begin{proof}
If $k=0$ there are no bilateral separations at all, so by Lemma~\ref{lem:generation}
$\apart=\cl(\emptyset)$ consists exactly of the vacuous pairs. The scene in $\R^0=\{0\}$ with
every $C_e=\{0\}$ realizes this: every bilateral pair of nonempty hulls meets at $0$, vacuous
pairs are apart, and the margin condition is vacuous. Assume henceforth $k\geq 1$, and fix
for each $i\le k$ an ordered representative $M_i=(A_i^*,B_i^*)$ of the $i$-th maximal
separation; by (A1) the choice of orientation is immaterial, a reflection of the $i$-th
coordinate translating between the two choices.

\medskip\noindent\emph{The points.}
For each minimal Radon partition $P=(A',B')\in R$ (an ordered representative is fixed once)
and each $e\in\supp(P)$ introduce an \emph{anchor} $q_{P,e}\in\R^k$, and for each $e\in E$ a
\emph{base point} $t_e\in\R^k$; the coordinates are fixed below. Put
\[
C_e \;=\; \operatorname{conv}\Bigl(\{t_e\}\cup\{q_{P,e} : P\in R,\ e\in\supp(P)\}\Bigr).
\]
Each anchor belongs to exactly one $P$; base points belong to none. The constraints come in
two kinds.

\smallskip
\emph{Side constraints.} In coordinate $i$, every generating point of $C_e$ (its base point
and all its anchors) must satisfy: value $\le -1$ if $e\in A_i^*$; value $\ge +1$ if
$e\in B_i^*$; no constraint if $e\notin A_i^*\cup B_i^*$.

\smallskip
\emph{Witness equations.} For each $P=(A',B')\in R$, writing averages componentwise,
\[
w_P \;:=\; \frac{1}{|A'|}\sum_{a\in A'} q_{P,a} \;=\; \frac{1}{|B'|}\sum_{b\in B'} q_{P,b}.
\]
Both components of a Radon partition are nonempty by (A3), so the averages are defined. Since
distinct Radon partitions share no anchors, the witness equations decompose into one scalar
equation per pair $(i,P)$, and the side constraints are componentwise; the system therefore
splits coordinate by coordinate.

\smallskip\noindent\emph{Feasibility in one coordinate.}
Fix $i\le k$ and $P=(A',B')\in R$. On a component $X\in\{A',B'\}$ declare an element
negative, positive, or free according as it lies in $A_i^*$, in $B_i^*$, or in neither side
of the maximal separation $M_i=(A_i^*,B_i^*)$. Lemma~\ref{lem:onecoordinate} gives the
achievable average set of each component: $(-\infty,-1]$ for a negative-pure component,
$[1,\infty)$ for a positive-pure component, and all of $\R$ for a mixed component. The two
achievable sets can be disjoint only when one component is negative-pure and the other is
positive-pure. That would mean
$(A',B')\sqsubseteq(A_i^*,B_i^*)$ or
$(A',B')\sqsubseteq(B_i^*,A_i^*)$, impossible because $M_i$ is a separation and
$(A',B')$ is a Radon partition. Hence the achievable sets intersect.

Choose a common target
\[
\tau_{i,P}=
\begin{cases}
-1 & \text{if some component is negative-pure},\\
+1 & \text{if some component is positive-pure},\\
0 & \text{otherwise.}
\end{cases}
\]
The disjoint pure cases have just been excluded, so $\tau_{i,P}$ is well defined and lies in
both achievable sets. Lemma~\ref{lem:onecoordinate} realizes this target on both components
with rational values of denominator at most $n$ and absolute value at most $2n+1$; multiplying
by the harmless factor $n$ gives the stated numerator bound $n(2n+1)$. Finally set the base
points: $t_e[i]=-1$ if $e\in A_i^*$, $t_e[i]=+1$ if $e\in B_i^*$, and $t_e[i]=0$ otherwise;
base points occur in no equation.

\smallskip\noindent\emph{Verification.}
First, separations. Let $(A,B)\in\apart$. If a component is empty,
$\hull(\emptyset)=\emptyset$ settles the pair with infinite margin. Otherwise $(A,B)$ is
dominated, possibly after a swap, by some $M_i=(A_i^*,B_i^*)$; in coordinate $i$ every
generating point of every $C_a$ with $a\in A\subseteq A_i^*$ has value $\le-1$, hence
$\hull_{\mathcal C}(A)\subseteq\{x:x_i\le-1\}$, and likewise
$\hull_{\mathcal C}(B)\subseteq\{x:x_i\ge+1\}$. The hulls lie in parallel half-spaces at
distance $2$, so $(A,B)\in\apart_{\mathcal C}$ with
$\marg_{\mathcal C}(A,B)\ge 2$.

Second, Radon partitions. Let $(A,B)\in\cross_\Sigma$; it dominates a minimal Radon
partition $P=(A',B')$ with $(A',B')\sqsubseteq(A,B)$, possibly after a swap, which (A1)
absorbs. The witness $w_P$ is, in every coordinate, the common value $\tau_{i,P}$ of the two
averages, hence a single well-defined point of $\R^k$ lying in
$\operatorname{conv}\{q_{P,a}:a\in A'\}\subseteq\hull_{\mathcal C}(A')\subseteq
\hull_{\mathcal C}(A)$ and symmetrically in $\hull_{\mathcal C}(B)$. So
$(A,B)\in\cross_{\mathcal C}$.

Every pair of $\dy{E}$ is a separation or a Radon partition of $\Sigma$, and we have matched
both kinds; hence $\apart_{\mathcal C}=\apart_\Sigma$.
\end{proof}

\begin{remark}\label{rem:dimension}
A realization in dimension $n-1$ is available by a different construction
\cite{Strausz03,BS06}, and $n-1$ is in general far smaller than $k$: for the structure with
no Radon partitions on $n$ sites, $k=2^{n-1}-1$, while $n-1$ suffices and, by
Proposition~\ref{prop:simploid} below, is optimal. The trade is deliberate. The construction
above buys three things for its dimensions: each ambient coordinate is indexed by a maximal
separation, so that, after Lemma~\ref{lem:generation} and Theorem~\ref{thm:subsumption},
each coordinate is literally a deduction and each separating hyperplane is an axis
hyperplane named by the maximal consequence it certifies; the coordinates are rationals of
explicitly bounded height; and the margins are uniform, which is what powers the stability
theory of the next subsection. No margin bounds are asserted for the dimension-$(n-1)$
construction in \cite{Strausz03,BS06}.
\end{remark}

\begin{proposition}[Position relative to the Strausz-Bracho representation]\label{prop:position}
The representation content used in this paper separates into four parts.
\begin{itemize}
\item The axiomatics are the acyclic separoid axioms of \cite{Strausz03} written in
separation polarity; this is a dictionary, not a new class of structures.
\item Dimension-$n-1$ realizability is imported from \cite{Strausz03,BS06} and is used only
in Theorem~\ref{thm:hierarchy} to identify the exact stabilization threshold.
\item The coordinate-by-maximal-separation construction of Theorem~\ref{thm:representation}
is a different, generally higher-dimensional realization. Its dimensions are indexed by
proof certificates rather than optimized geometrically.
\item The rational height bound, the uniform margin $2$, the outer-parallel stability
radius, and the explicit enumeration statement of Proposition~\ref{prop:effective} are the
quantitative additions supplied here.
\end{itemize}
Thus the paper uses known separoid representation theory to locate the optimal ambient
threshold, and supplies a separate rational-margin representation to make the logical and
stability arguments inspectable coordinate by coordinate.
\end{proposition}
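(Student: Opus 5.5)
The statement is an organizational one: it claims that the representation content of the paper splits into four parts. Each part is justified by pointing to an earlier result or to the explicit shape of a construction, so I will verify the four bullets in order.

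\emph{First bullet.} I will invoke Proposition~\ref{prop:crypto} directly. Complementation $\apart\mapsto\dy{E}\setminus\apart$ is a bijection between apartness separoids and acyclic separoids. Its proof translates (A1) to $(\circ)$, (A2) to $(\circ\circ\circ)$, and (A3) to acyclicity. So the class of structures is unchanged, and only the polarity of the primitive differs. This is exactly the "dictionary" claim.

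\emph{Second bullet.} Here I will only observe a usage fact. The dimension-$(n-1)$ realization of \cite{Strausz03,BS06} is not used in the proofs of Lemma~\ref{lem:onecoordinate}, Theorem~\ref{thm:representation}, or Proposition~\ref{prop:effective}. It is reserved for the upper bound $\gd(\Sigma)\le n-1$, which Theorem~\ref{thm:hierarchy} needs to show that $\models_{n-1}=\modgeo$.

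\emph{Third bullet.} I will point to the construction in Theorem~\ref{thm:representation}. Its ambient dimension is $k=|M_\Sigma|$, with coordinate $i$ carrying the side constraints of $M_i$. Remark~\ref{rem:dimension} gives the example with no Radon partitions: there $k=2^{n-1}-1$, which exceeds $n-1$ once $n\ge3$. This shows that the construction is genuinely different and generally higher dimensional. Its coordinates are indexed by maximal separations, which by Lemma~\ref{lem:generation} generate $\apart_\Sigma$ under $\cl$. Each coordinate is therefore a certificate of one maximal separation, not a geometric optimization.

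\emph{Fourth bullet.} I will list where each quantitative addition is proved.
\begin{itemize}
\item The height bound $|p|\le n(2n+1)$, $q\le n$ comes from Lemma~\ref{lem:onecoordinate} and the target choice $\tau_{i,P}\in\{-1,0,1\}$.
\item The margin $2$ comes from the verification step, since the separated hulls lie in the half-spaces $\{x_i\le-1\}$ and $\{x_i\ge1\}$.
\item The enumeration bounds are those of Proposition~\ref{prop:effective}.
\item The outer-parallel radius follows from the margin. If $\marg\ge2$, then enlarging both hulls by $r<1$ keeps them a positive distance $2-2r$ apart. Hulls that meet still meet after enlargement, since enlargement only adds points. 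This forward reference is what Theorem~\ref{thm:stability} records.
\end{itemize}
Remark~\ref{rem:dimension} notes that no margin or height bounds are asserted for the dimension-$(n-1)$ construction, which supports calling these additions new here.

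The main obstacle is not mathematical. It is making the "not used elsewhere" claim of the second bullet precise, which requires a dependency check that the cited theorem enters no proof except the stabilization direction of Theorem~\ref{thm:hierarchy}. I would handle this by stating that dependency explicitly.
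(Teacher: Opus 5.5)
Your bullet-by-bullet verification matches how the paper supports this proposition: the paper gives no separate proof and relies on Proposition~\ref{prop:crypto}, on Theorem~\ref{thm:hierarchy}(iii) (which itself flags the import as the single external dependency), on Theorem~\ref{thm:representation} with Remark~\ref{rem:dimension}, and on Proposition~\ref{prop:effective} with Theorem~\ref{thm:stability}. Two precision points: Remark~\ref{rem:dimension} disclaims only \emph{margin} bounds for the dimension-$(n-1)$ construction, not height bounds; and the stability radius concerns enlarging the bodies, so your ``enlarge both hulls'' step implicitly uses the identity $\hull_{\mathcal C^{(r)}}(A)=\hull_{\mathcal C}(A)+rB^k$, which is exactly what Theorem~\ref{thm:stability} proves.
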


\begin{corollary}[Geometric completeness and invariance]\label{cor:invariance}
A relation $\apart\subseteq\dy{E}$ is the apartness relation of some scene if and only if it
satisfies \textup{(A1)--(A3)}, if and only if it is the apartness relation of a scene of
rational polytopes. Consequently $\modsep$ and $\modgeo$ coincide: for all
$\Gamma\cup\{\varphi\}\subseteq\Lang_E$,
\[
\Gamma\modgeo\varphi\iff\Gamma\modsep\varphi .
\]
The Euclidean apparatus (dimension, coordinates, the bodies themselves) contributes no
consequence beyond the three laws.
\end{corollary}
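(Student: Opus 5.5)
The corollary follows by chaining Proposition~\ref{prop:threelaws} with Theorem~\ref{thm:representation}. I would prove the three-way equivalence as a cycle of implications and then deduce the coincidence of consequence relations from equality of the classes of atom valuations.

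\emph{Scene implies the three laws.} If $\apart$ is the apartness relation of some scene, Proposition~\ref{prop:threelaws} gives (A1)--(A3).

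\emph{Three laws imply rational polytopes.} If $\apart$ satisfies (A1)--(A3), then $(E,\apart)$ is an apartness separoid in the sense of Definition~\ref{def:as}. It has finitely many maximal bilateral separations and minimal Radon partitions, since $E$ is finite, so Theorem~\ref{thm:representation} applies. That theorem realizes $(E,\apart)$ in $\R^k$ by convex hulls of finitely many rational points, i.e.\ rational polytopes. The case $k=0$ is the degenerate realization in $\R^0$ treated in that proof.

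\emph{Rational polytopes imply scene.} Rational polytopes are nonempty compact convex sets, so such a family is in particular a scene in the sense of Definition~\ref{def:scene}.

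For the consequence relations, I would observe that an $\Lang_E$-formula's truth value under a separoid or a scene depends only on the induced set of true atoms. The evaluation is classical and Boolean over atoms, and a scene evaluates atoms through $\apart_{\mathcal C}$. By the equivalence just proved, the family of relations $\{\apart_{\mathcal C}\}$ arising from scenes (of all dimensions) equals the family of apartness separoids over $E$. Hence the class of atom valuations realized by scenes coincides with that realized by separoids. Then $\Gamma\modgeo\varphi$ holds iff every valuation in this common class satisfying $\Gamma$ satisfies $\varphi$, iff $\Gamma\modsep\varphi$. For the explicit directions: $\modsep\subseteq\modgeo$ was already noted after Definition~\ref{def:language}. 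Conversely, if a separoid $\Sigma$ satisfies $\Gamma$ but not $\varphi$, a realizing scene from Theorem~\ref{thm:representation} has the same atom valuation and so is a geometric countermodel.

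There is no real obstacle here; all geometric content sits in Theorem~\ref{thm:representation}. The only care needed is bookkeeping: $\modgeo$ quantifies over scenes of every dimension (so the $k$-dimensional realization is admissible), and the degenerate case $k=0$ in $\R^0$ is covered by that theorem.
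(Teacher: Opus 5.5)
Your proposal is correct and follows the paper's proof: the three-way equivalence is Proposition~\ref{prop:threelaws} combined with Theorem~\ref{thm:representation}, and $\modgeo=\modsep$ follows because scenes and apartness separoids induce the same class of atom valuations. The cycle of implications, the $k=0$ case, and the note that $\modgeo$ ranges over all dimensions simply spell out the paper's two-line argument in more detail.
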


\begin{proof}
The first equivalences combine Proposition~\ref{prop:threelaws} with
Theorem~\ref{thm:representation}. For the second, the classes of atom-valuations induced by
scenes and by apartness separoids coincide, hence so do the consequence relations they
define.
\end{proof}

\begin{example}[The construction, run]\label{ex:construction}
Run Theorem~\ref{thm:representation} on the abstract relation of Example~\ref{ex:running}:
$E=\{a,b,c\}$, maximal bilateral separations $M_1=(\{b\},\{a,c\})$ and $M_2=(\{c\},\{a,b\})$,
unique minimal Radon partition $P=(\{a\},\{b,c\})$. So $k=2$, $m=1$; the points carry
coordinates $(x_1,x_2)$.

Coordinate $1$ (sides: $b\mapsto -$; $a,c\mapsto +$). The component $\{a\}$ of $P$ is
positive-pure; $\{b,c\}$ is mixed. So $\tau_{1,P}=+1$: set $q_{P,a}[1]=1$, and on the mixed
side $q_{P,b}[1]=-1$, $q_{P,c}[1]=3$, whose average is $1$. Bases: $t_a[1]=1$,
$t_b[1]=-1$, $t_c[1]=1$.

Coordinate $2$ (sides: $c\mapsto -$; $a,b\mapsto +$). Again $\{a\}$ is positive-pure and
$\{b,c\}$ mixed: $\tau_{2,P}=+1$, $q_{P,a}[2]=1$, $q_{P,b}[2]=3$, $q_{P,c}[2]=-1$;
$t_a[2]=1$, $t_b[2]=1$, $t_c[2]=-1$.

The bodies come out as
\[
C_a=\{(1,1)\},\qquad
C_b=\operatorname{conv}\{(-1,1),(-1,3)\},\qquad
C_c=\operatorname{conv}\{(1,-1),(3,-1)\},
\]
a point, a vertical segment on $x_1=-1$, and a horizontal segment on $x_2=-1$
(Figure~\ref{fig:construction}). The hyperplane $x_1=0$ certifies $M_1$ and the hyperplane
$x_2=0$ certifies $M_2$, each with clearance $1$ on both sides; the witness
\[
w_P=(1,1)=q_{P,a}=\tfrac12\bigl((-1,3)+(3,-1)\bigr)
\]
lies in $C_a$ and on the segment joining $q_{P,b}\in C_b$ to $q_{P,c}\in C_c$, so
$\{a\}\cross\{b,c\}$ exactly as prescribed. Every pair of $\dy{E}$ can be checked against
the picture in a few seconds; integer coordinates of height $3$ suffice here, well inside
the bound of Theorem~\ref{thm:representation}. The bodies are degenerate (a point and two
segments); Theorem~\ref{thm:stability} repairs this without disturbing a single bit of the
relation.
\end{example}

\begin{figure}[t]
\centering
\begin{tikzpicture}[scale=1.0]
\draw[->,black!50] (-2,0) -- (4,0) node[anchor=west] {\small $x_1$};
\draw[->,black!50] (0,-2) -- (0,4) node[anchor=south] {\small $x_2$};
\draw[dashed] (0,-1.8) -- (0,3.8);
\draw[dashed] (-1.8,0) -- (3.8,0);
\draw[dotted,thick] (-1,3) -- (3,-1);
\draw[very thick] (-1,1) -- (-1,3);
\draw[very thick] (1,-1) -- (3,-1);
\fill (1,1) circle (2.2pt);
\node[anchor=west] at (1.08,1.18) {\small $C_a=w_P$};
\node[anchor=east] at (-1.1,2) {\small $C_b$};
\node[anchor=north] at (2,-1.1) {\small $C_c$};
\fill (-1,3) circle (1.6pt); \node[anchor=south east] at (-1,3) {\small $q_{P,b}$};
\fill (3,-1) circle (1.6pt); \node[anchor=south west] at (3,-1) {\small $q_{P,c}$};
\node[anchor=south west] at (0.05,3.4) {\small $x_1=0$};
\node[anchor=south west] at (3.0,0.08) {\small $x_2=0$};
\end{tikzpicture}
\caption{The scene produced in Example~\ref{ex:construction}. The axis hyperplanes $x_1=0$
and $x_2=0$ certify the two maximal separations; the dotted chord from $q_{P,b}$ to
$q_{P,c}$ passes through the witness $w_P=(1,1)$, which is the whole of $C_a$, certifying
the unique minimal Radon partition.}
\label{fig:construction}
\end{figure}

\begin{example}[A Radon lattice from a negative presentation]\label{ex:radonlattice}
Let $E=\{a,b,c,d\}$ and prescribe two minimal Radon partitions,
\[
P_1=(\{a\},\{b,c\}),\qquad P_2=(\{d\},\{b,c\}),
\]
together with their swaps. Let crossing be the upward closure of these two pairs and let
apartness be the complement in $\dy{E}$. This is an apartness separoid by construction: the
crossing relation is symmetric and upward closed, so its complement is symmetric, downward
closed, and contains all vacuous pairs. The two minimal Radon partitions are incomparable,
and their upper cones meet at $(\{a,d\},\{b,c\})$; hence the Radon side is already a small
lattice rather than a list of isolated witnesses. The maximal bilateral separations are
\[
\{a,b,d\}\apart\{c\},\qquad
\{a,c\}\apart\{b,d\},\qquad
\{a,b\}\apart\{c,d\},\qquad
\{a,c,d\}\apart\{b\},
\]
up to symmetry. Theorem~\ref{thm:representation} therefore realizes this four-site example
in four proof-readable coordinates, with one coordinate per displayed maximal separation and
one anchor family per displayed minimal Radon partition.
\end{example}

\begin{example}[Why the proof-readable dimension can be large]\label{ex:largek}
For the $n$-site simploid, every bilateral disjoint pair is apart and there are no Radon
partitions. The maximal bilateral separations are exactly the unordered bipartitions
$A\mid E\setminus A$ with $\emptyset\ne A\ne E$, so
\[
|M_\Sigma|=2^{n-1}-1.
\]
The coordinate construction therefore uses $2^{n-1}-1$ coordinates. By contrast, the
classical geometric realization places the $n$ sites at the vertices of an $(n-1)$-simplex,
and Proposition~\ref{prop:simploid} shows that dimension $n-1$ is optimal. This is the clean
case where $k\gg n$: the high-dimensional realization is intentionally a certificate space,
not a dimension-minimal scene.
\end{example}

\subsection{Margins, distance functions, and stability}\label{sec:margins}

The margin of Definition~\ref{def:margin} organizes the quantitative side of the theory. It
is most cleanly handled through distance functions, whose elementary properties we prove and
whose analytic pedigree we record.

\begin{lemma}[Distance functions]\label{lem:distance}
For a nonempty compact convex $K\subseteq\R^d$ let $d_K(x)=\min_{y\in K}\lVert x-y\rVert$.
Then:
\begin{itemize}
\item[\textup{(i)}] $d_K$ is convex, $1$-Lipschitz, and is the pointwise largest
$1$-Lipschitz function vanishing on $K$;
\item[\textup{(ii)}] if $K\subseteq K'$ then $d_{K'}\le d_K$ pointwise;
\item[\textup{(iii)}] for nonempty compact convex $K,L$,
$\ \marg(K,L)=\min_{x\in\R^d}\bigl(d_K(x)+d_L(x)\bigr)$;
\item[\textup{(iv)}] for $r\ge0$, the sublevel set $\{d_K\le r\}$ equals the outer parallel
body $K+rB^d$, where $B^d$ is the closed unit ball.
\end{itemize}
\end{lemma}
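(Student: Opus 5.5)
The plan is to prove the four items in order, each from the definition $d_K(x)=\min_{y\in K}\lVert x-y\rVert$. The minimum is attained because $K$ is compact and nonempty. Write $\pi_K(x)$ for a nearest point of $K$ to $x$; uniqueness is not needed.

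For (i), convexity comes from the convexity of $K$. Given $x_0,x_1$ with nearest points $y_0,y_1$, the point $y_t=(1-t)y_0+ty_1$ lies in $K$. The triangle inequality then gives
\[
d_K(x_t)\le\lVert x_t-y_t\rVert\le(1-t)d_K(x_0)+t\,d_K(x_1).
\]
The $1$-Lipschitz bound is the usual estimate $d_K(x)\le\lVert x-\pi_K(z)\rVert\le\lVert x-z\rVert+d_K(z)$, followed by symmetrizing. For maximality, take any $1$-Lipschitz $f$ with $f|_K=0$. Then $f(x)=f(x)-f(\pi_K(x))\le\lVert x-\pi_K(x)\rVert=d_K(x)$. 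Since $d_K$ itself vanishes on $K$ and is $1$-Lipschitz, it is the largest such function.

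Item (ii) is immediate, since a minimum over a larger set is no larger.

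Item (iii) is the only step with real content, and I would argue it by two inequalities. Let $\mu=\marg(K,L)$, attained at $(p,q)$ by compactness, as in the proof of Lemma~\ref{lem:separation}.
\begin{itemize}
\item \emph{Lower bound.} For any $x$, set $y=\pi_K(x)$ and $z=\pi_L(x)$. Then
\[
d_K(x)+d_L(x)=\lVert x-y\rVert+\lVert x-z\rVert\ge\lVert y-z\rVert\ge\mu .
\]
\item \emph{Upper bound.} Take $x=p$. Then $d_K(p)=0$ and $d_L(p)\le\lVert p-q\rVert=\mu$.
\end{itemize}
So the infimum equals $\mu$ and is attained at $p$, which justifies writing $\min$. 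Any point of the segment $[p,q]$ also attains it, which is the picture I would mention for context.

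For (iv), the inclusion $K+rB^d\subseteq\{d_K\le r\}$ holds because $x=y+rb$ with $y\in K$ and $\lVert b\rVert\le1$ gives $d_K(x)\le\lVert x-y\rVert\le r$. Conversely, if $d_K(x)\le r$, then $x=\pi_K(x)+(x-\pi_K(x))$ with the second summand of norm at most $r$. Attainment of the minimum is again what makes the sublevel set closed and equal to the Minkowski sum, rather than merely contained in its closure.

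I expect no step to be a genuine obstacle. The only care needed is to use attainment of minima (compactness) in (i), (iii) and (iv), and to handle (iii) with nonempty sets only, which matches the convention $\marg=+\infty$ in Definition~\ref{def:margin}. Background on the eikonal equation $\lVert\nabla d_K\rVert=1$ off $K$ will be recorded only as context, not used.
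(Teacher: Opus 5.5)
Your proof is correct and follows the paper's argument item by item. Convexity uses convex combinations of nearest points, the Lipschitz bound and maximality go through a nearest point, (ii) is the minimum over a larger set, (iii) is the triangle inequality through nearest points, and (iv) writes $x$ as a nearest point plus a displacement of norm at most $r$. The only cosmetic difference is that in (iii) you show attainment at the endpoint $p$ of a closest pair rather than at its midpoint; this is equally valid, since, as you note, every point of $[p,q]$ attains the minimum.
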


\begin{proof}
(i) Convexity: for $x_0,x_1$ pick nearest points $y_0,y_1\in K$; then
$y_t=(1-t)y_0+ty_1\in K$ and
$d_K(x_t)\le\lVert x_t-y_t\rVert\le(1-t)\lVert x_0-y_0\rVert+t\lVert x_1-y_1\rVert$.
The Lipschitz bound is the triangle inequality through a nearest point. Maximality: if $u$
is $1$-Lipschitz and vanishes on $K$, then for any $x$ and any $y\in K$,
$u(x)\le u(y)+\lVert x-y\rVert=\lVert x-y\rVert$, and minimizing over $y$ gives
$u\le d_K$. (ii) The minimum over a larger set is no larger. (iii) For any $x$, picking
nearest points $p\in K$, $q\in L$ gives
$d_K(x)+d_L(x)=\lVert x-p\rVert+\lVert x-q\rVert\ge\lVert p-q\rVert\ge\marg(K,L)$; at the
midpoint of a closest pair the value $\marg(K,L)$ is attained. (iv) $d_K(x)\le r$ if and
only if $x=y+z$ with $y\in K$ and $\lVert z\rVert\le r$.
\end{proof}

\begin{remark}[Distance functions and the eikonal context]\label{rem:eikonal}
On the open complement of $K$, the function $d_K$ is the viscosity solution of the eikonal
equation $\lvert\nabla u\rvert=1$ with zero boundary data on $\partial K$, in the framework
of \cite{CL83}; see \cite[Ch.~II]{BCD97} for the distance function as the value function of
the associated minimum-time problem. The present paper uses only the convex-analytic facts
proved in Lemma~\ref{lem:distance}: Lipschitz maximality, monotonicity under inclusion, the
minimum formula for the margin, and the identification of sublevel sets with outer parallel
bodies. The eikonal language supplies an interpretation of this bookkeeping, while the
representation, stability, and consequence theorems rest on the elementary distance
statements above.
\end{remark}

\begin{lemma}[The three laws as thresholded monotonicity]\label{lem:threshold}
For every scene $\mathcal C$ and all $(A,B),(A',B')\in\dy{E}$:
\begin{itemize}
\item[\textup{(i)}] $A\apart_{\mathcal C}B\iff\marg_{\mathcal C}(A,B)>0$;
\item[\textup{(ii)}] $\marg_{\mathcal C}(A,B)=\marg_{\mathcal C}(B,A)$;
\item[\textup{(iii)}] if $(A',B')\sqsubseteq(A,B)$ then
$\marg_{\mathcal C}(A',B')\ge\marg_{\mathcal C}(A,B)$;
\item[\textup{(iv)}] $\marg_{\mathcal C}(\emptyset,B)=+\infty$.
\end{itemize}
Thus \textup{(A1)--(A3)} are exactly the thresholded forms, under $\marg>0$, of the
symmetry, antitonicity, and vacuity of the margin; in particular \textup{(A2)} follows from
the comparison in Lemma~\ref{lem:distance}\textup{(ii)}, applied to the nested hulls.
\end{lemma}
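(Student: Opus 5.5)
The plan is to prove each item directly from Definition~\ref{def:margin} and Lemma~\ref{lem:separation}, and to derive the antitonicity in (iii) from Lemma~\ref{lem:distance}. Throughout we write $K=\hull_{\mathcal C}(A)$, $L=\hull_{\mathcal C}(B)$, and use primes for the corresponding hulls of $A'$ and $B'$.

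For (i), if one component is empty the pair is apart by (A3), and the margin is $+\infty>0$ by definition. If both components are nonempty, both hulls are nonempty compact convex sets (as noted after Definition~\ref{def:scene}). Lemma~\ref{lem:separation} then says exactly that $K\cap L=\emptyset$ iff $\marg(K,L)>0$.

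Item (ii) follows from the symmetry of $\lVert x-y\rVert$ under exchanging the roles of $x$ and $y$. In the empty case the definition itself is symmetric. Item (iv) is immediate from the convention $\hull(\emptyset)=\emptyset$ together with $\marg=+\infty$ whenever a set is empty.

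For (iii), first dispose of the degenerate cases. If $A'$ or $B'$ is empty, the left side is $+\infty$ and the inequality holds trivially. Otherwise $A'$ and $B'$ are nonempty, hence so are $A\supseteq A'$ and $B\supseteq B'$. Monotonicity of convex hulls of unions gives $K'\subseteq K$ and $L'\subseteq L$. By Lemma~\ref{lem:distance}(ii) this yields $d_{K}\le d_{K'}$ and $d_L\le d_{L'}$ pointwise. Adding these and minimizing over $x$ via the formula of Lemma~\ref{lem:distance}(iii) gives
\[
\marg(K,L)=\min_x\bigl(d_K+d_L\bigr)\le\min_x\bigl(d_{K'}+d_{L'}\bigr)=\marg(K',L').
\]
This is also the route that justifies the closing sentence of the statement, namely that (A2) comes from the distance comparison on nested hulls. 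One could alternatively argue directly that a minimum over a larger product set $K\times L$ is no larger than the minimum over $K'\times L'$.

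The final sentence of the statement combines these items. Applying the threshold (i) to (ii), (iii) and (iv) recovers (A1), (A2) and (A3) respectively; for instance, $\marg(A,B)>0$ together with (iii) forces $\marg(A',B')>0$. There is no real obstacle here. The only point needing care is that the case split on empty components must be handled before invoking Lemma~\ref{lem:distance}, since $d_K$ is defined only for nonempty $K$.
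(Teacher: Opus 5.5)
Your proof is correct and matches the paper's argument essentially step for step: (i) from Lemma~\ref{lem:separation}, (ii) and (iv) from the definition of the margin, and (iii) by first disposing of empty components and then applying Lemma~\ref{lem:distance}\textup{(ii)} to the nested hulls and passing to the minimum formula of Lemma~\ref{lem:distance}\textup{(iii)}. Your separate treatment of empty components in (i) is harmless but not needed, since Lemma~\ref{lem:separation} already covers empty sets.
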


\begin{proof}
(i) is Lemma~\ref{lem:separation}; (ii) is symmetry of $\dist$; (iv) is the convention of
Definition~\ref{def:margin}, forced by $\hull(\emptyset)=\emptyset$. For (iii), if either
side of $(A',B')$ is empty the margin is $+\infty$; otherwise
$\hull(A')\subseteq\hull(A)$ and $\hull(B')\subseteq\hull(B)$, so by
Lemma~\ref{lem:distance}(ii) $d_{\hull(A)}\le d_{\hull(A')}$ and likewise for $B$, and the
minimum in Lemma~\ref{lem:distance}(iii) can only increase.
\end{proof}

\begin{theorem}[Uniform margin and stability of the realization]\label{thm:stability}
Let $\Sigma$ be an apartness separoid and let $\mathcal C=(k,(C_e)_e)$ be the scene of
Theorem~\ref{thm:representation}. For $0\le r$ write
$\mathcal C^{(r)}=(k,(C_e+rB^k)_e)$ for the outer parallel scene. Then:
\begin{itemize}
\item[\textup{(i)}] for every $0\le r<1$, the scene $\mathcal C^{(r)}$ realizes $\Sigma$,
with $\marg_{\mathcal C^{(r)}}(A,B)\ge 2-2r$ on every bilateral separation;
\item[\textup{(ii)}] for $0<r<1$ the bodies of $\mathcal C^{(r)}$ are full-dimensional, so
every apartness separoid is realized by a scene of full-dimensional bodies;
\item[\textup{(iii)}] if $\mathcal D=(k,(D_e)_e)$ is any scene with Hausdorff distance
$d_H(D_e,C_e)\le r<1$ for all $e\in E$, then
$\apart_\Sigma\subseteq\apart_{\mathcal D}$: every separation of $\Sigma$ persists in
$\mathcal D$, with margin at least $2-2r$.
\end{itemize}
\end{theorem}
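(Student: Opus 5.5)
The argument rests on one identity for outer parallel bodies. For compact convex sets and $r\ge0$ we have
\[
\operatorname{conv}\Bigl(\bigcup_{a\in A}(C_a+rB^k)\Bigr)=\hull_{\mathcal C}(A)+rB^k .
\]
The inclusion $\supseteq$ holds because a point $\sum\lambda_a x_a+z$, with $x_a\in C_a$ and $\lVert z\rVert\le r$, equals $\sum\lambda_a(x_a+z)$. The inclusion $\subseteq$ holds because the right side is convex and contains each $C_a+rB^k$. So the hulls of $\mathcal C^{(r)}$ are exactly the outer parallel bodies of the hulls of $\mathcal C$, and by Lemma~\ref{lem:distance}(iv) these are the sublevel sets $\{d_{\hull(A)}\le r\}$.

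For (i) I take the two kinds of pairs separately. Let $(A,B)$ be a bilateral separation of $\Sigma$. The proof of Theorem~\ref{thm:representation} puts $\hull(A)$ in $\{x_i\le-1\}$ and $\hull(B)$ in $\{x_i\ge 1\}$ for a suitable maximal-separation coordinate $i$ (after a swap). Enlarging by $r$ moves these to $\{x_i\le -1+r\}$ and $\{x_i\ge 1-r\}$, which gives margin at least $2-2r>0$. Alternatively, one can use $\marg(K+rB,L+rB)\ge\marg(K,L)-2r$, which follows from Lemma~\ref{lem:distance}(iii) and the triangle inequality. Vacuous pairs stay vacuous. Now let $(A,B)$ be a Radon partition of $\Sigma$. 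The original hulls already meet, and enlargement only grows them, so the pair still crosses. Therefore $\apart_{\mathcal C^{(r)}}=\apart_\Sigma$.

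For (ii), each $C_e+rB^k$ with $r>0$ contains a ball of radius $r$, so it has nonempty interior.

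For (iii), assume $d_H(D_e,C_e)\le r$. Then $D_e\subseteq C_e+rB^k$ for every $e$. Hence $\hull_{\mathcal D}(A)$ lies inside the hull of these enlarged bodies, which by the identity is $\hull_{\mathcal C}(A)+rB^k$, and the same holds for $B$. By the monotonicity of margins in Lemma~\ref{lem:threshold}(iii)/Lemma~\ref{lem:distance}(ii), applied to nested sets, $\marg_{\mathcal D}(A,B)\ge\marg_{\mathcal C^{(r)}}(A,B)\ge 2-2r$ for every bilateral separation. This gives persistence. Only one inclusion is claimed here, because crossings can be destroyed by shrinking the bodies.

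The main obstacle is the hull/enlargement identity. Everything else is bookkeeping with half-spaces. The margin bound could alternatively be read off directly from the axis half-spaces, since enlarging by $r$ shifts each half-space boundary by exactly $r$.
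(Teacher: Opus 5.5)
Your proof is correct and follows essentially the same route as the paper. The identity $\hull_{\mathcal C^{(r)}}(A)=\hull_{\mathcal C}(A)+rB^k$ drives all three parts, crossings survive because hulls only grow, and $D_e\subseteq C_e+rB^k$ gives (iii). The only cosmetic differences are two. In (i) you read the bound $2-2r$ directly off the axis half-spaces $\{x_i\le -1\}$ and $\{x_i\ge 1\}$ of the construction, whereas the paper uses the general displacement bound $\dist(K+rB^k,L+rB^k)\ge\dist(K,L)-2r$, which you also mention. In (iii) you route through $\marg_{\mathcal C^{(r)}}$ and monotonicity under inclusion, whereas the paper applies the displacement bound directly.
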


\begin{proof}
Two identities first. For any $X\subseteq\R^k$ and $r\ge0$,
$\operatorname{conv}(X+rB^k)=\operatorname{conv}(X)+rB^k$: the right side is convex and
contains $X+rB^k$, giving one inclusion; conversely a point
$\sum\lambda_j x_j+rb$ equals $\sum\lambda_j(x_j+rb)$, giving the other. Since also
$\bigcup_{a\in A}(C_a+rB^k)=\bigl(\bigcup_{a\in A}C_a\bigr)+rB^k$, we get
\begin{equation}\label{eq:parallelhull}
\hull_{\mathcal C^{(r)}}(A)=\hull_{\mathcal C}(A)+rB^k
\qquad\text{for all }A\subseteq E .
\end{equation}
Second, for nonempty compact convex $K,L$ and $r\ge 0$,
$\dist(K+rB^k,L+rB^k)\ge\dist(K,L)-2r$, since
$\lVert(x+rb_1)-(y+rb_2)\rVert\ge\lVert x-y\rVert-2r$.

(i) Bilateral separations of $\Sigma$ have margin $\ge2$ in $\mathcal C$ by
Theorem~\ref{thm:representation}; by \eqref{eq:parallelhull} and the displacement bound
their margin in $\mathcal C^{(r)}$ is at least $2-2r>0$, so they remain separations.
Vacuous pairs are apart in every scene. Radon partitions of $\Sigma$ cross in
$\mathcal C$; by \eqref{eq:parallelhull} the hulls of $\mathcal C^{(r)}$ contain those of
$\mathcal C$, so the same witness point exhibits crossing. Hence
$\apart_{\mathcal C^{(r)}}=\apart_\Sigma$.

(ii) $C_e+rB^k$ contains a ball of radius $r$ around any point of $C_e$.

(iii) $d_H(D_e,C_e)\le r$ gives $D_e\subseteq C_e+rB^k$, hence, by the argument for
\eqref{eq:parallelhull} applied to unions,
$\hull_{\mathcal D}(A)\subseteq\hull_{\mathcal C}(A)+rB^k$ for every $A$. For a bilateral
separation $(A,B)$ of $\Sigma$, the displacement bound gives
$\marg_{\mathcal D}(A,B)\ge\marg_{\mathcal C}(A,B)-2r\ge 2-2r>0$.
\end{proof}

\begin{remark}[Smoothing, and the limits of stability]\label{rem:smoothing}
For $0<r<1$ each body of $\mathcal C^{(r)}$ is an outer parallel body at positive radius and
therefore has $C^{1,1}$ boundary: such a body has reach at least $r$ in the sense of
\cite{Federer59}, and the regularity of parallel bodies of convex sets is classical
\cite[\S3.1]{Schneider14}. Every apartness separoid is thus realized by full-dimensional
bodies with $C^{1,1}$ boundaries, the centers of the construction remaining rational. The
asymmetry in Theorem~\ref{thm:stability}(iii) is intrinsic and not an artefact of the proof:
separations carry a margin and survive every sufficiently small perturbation, while a
crossing realized by a single touching point can be destroyed by an arbitrarily small one.
Within the outer parallel family both persist, which is why
Theorem~\ref{thm:stability}(i) is an equality of relations while (iii) is, and must be, an
inclusion. The relation produced by Theorem~\ref{thm:representation} therefore sits in
equilibrium: enlargement by any radius below $1$ moves every body and no bit.
\end{remark}

\subsection{Two exact computations}\label{sec:exact}

\begin{definition}\label{def:simploid}
The \emph{$d$-dimensional simploid} $\sigma^d$ is the apartness separoid on a set of $d+1$
sites in which every disjoint pair is apart; equivalently, the acyclic separoid of order
$d+1$ with no Radon partitions \cite{Strausz03}.
\end{definition}

\begin{proposition}\label{prop:simploid}
$\gd(\sigma^d)=d$ for every $d\ge 0$.
\end{proposition}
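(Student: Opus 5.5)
The proof splits into an upper bound $\gd(\sigma^d)\le d$ by an explicit scene and a lower bound $\gd(\sigma^d)\ge d$ by Radon's theorem.

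\emph{Upper bound.} Take $d+1$ affinely independent points $v_0,\dots,v_d\in\R^d$, for instance $0,e_1,\dots,e_d$, and set $C_i=\{v_i\}$. Let $(A,B)$ be a bilateral disjoint pair. Suppose $x\in\operatorname{conv}\{v_a:a\in A\}\cap\operatorname{conv}\{v_b:b\in B\}$. Equating the two convex combinations gives an affine dependence $\sum_{a\in A}\lambda_a v_a-\sum_{b\in B}\mu_b v_b=0$. Its coefficients sum to $1-1=0$, and not all of them vanish. This contradicts affine independence. Hence every disjoint pair is apart, and vacuous pairs are apart trivially. The case $d=0$ (one site, in $\R^0$) is immediate, since there all disjoint pairs are vacuous.

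\emph{Lower bound.} Suppose a scene $(d',(C_e)_e)$ in $\R^{d'}$ with $d'\le d-1$ realizes $\sigma^d$. Pick one point $p_e\in C_e$ for each of the $d+1\ge d'+2$ sites. Radon's theorem \cite{Radon21} yields a partition of these points into two nonempty parts $A,B$ with $\operatorname{conv}\{p_a:a\in A\}\cap\operatorname{conv}\{p_b:b\in B\}\ne\emptyset$. This set is contained in $\hull(A)\cap\hull(B)$, so $(A,B)$ is a bilateral crossing pair. That contradicts the fact that $\sigma^d$ has no Radon partitions. For $d=0$ the lower bound is vacuous.

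Radon's theorem need not be imported as a black box. It follows in one line: $d'+2$ points in $\R^{d'}$ admit a nontrivial affine dependence, and splitting its coefficients by sign, then normalizing, gives the common point. This step, the passage from the bodies to one representative point per site, is the only substantive ingredient. It works because apartness is antitone under shrinking the bodies, which is the argument of (A2) applied to $\{p_e\}\subseteq C_e$. There is therefore no real obstacle.
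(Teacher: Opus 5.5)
Your proof is correct and follows the paper's argument: the upper bound uses singleton bodies at the vertices of a $d$-simplex and derives a contradiction from affine dependence, and the lower bound chooses one point per body and applies Radon's theorem to the $d+1$ chosen points. Two details are small refinements of the same route rather than a different approach: you exclude every dimension $d'\le d-1$ directly, where the paper writes only $\R^{d-1}$, and you derive Radon's theorem in one line.
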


\begin{proof}
\emph{Upper bound.} Realize the $d+1$ sites by the vertices $v_0,\dots,v_d$ of a
$d$-simplex in $\R^d$, each body a single vertex. If disjoint $A,B$ had
$x\in\hull(A)\cap\hull(B)$, then writing $x$ as a convex combination over $A$ and over $B$
and subtracting yields a nontrivial affine dependence among the affinely independent points
$v_0,\dots,v_d$: the supports of the two combinations are disjoint, and at least one
coefficient on each side is nonzero unless a combination is empty, which vacuity of the
hull excludes. Hence every disjoint pair is apart.

\emph{Lower bound.} Suppose a scene $(C_e)_{e\in E}$ with $|E|=d+1$ in $\R^{d-1}$ realized
$\sigma^d$ for some $d\ge1$. Choose a point $p_e\in C_e$ for each $e$; this is the choice
construction of \cite{Strausz03}. The $d+1=(d-1)+2$ points $p_e$ in $\R^{d-1}$ admit, by
Radon's theorem \cite{Radon21}, a partition $E=A\,\dot\cup\,B$ into nonempty parts with
$\operatorname{conv}\{p_a:a\in A\}\cap\operatorname{conv}\{p_b:b\in B\}\neq\emptyset$.
Since $p_e\in C_e$, these point hulls are contained in $\hull(A)$ and $\hull(B)$
respectively, so $(A,B)$ would be a Radon partition of the scene; but $\sigma^d$ has none.
For $d=0$ the claim is trivial.
\end{proof}

\begin{theorem}[The dimension hierarchy and its threshold]\label{thm:hierarchy}
Let $n=|E|$. For $d\ge0$ and a subset $S\subseteq E$ with $|S|=d+2$, let
\[
\chi_S \;=\; \bigvee_{\substack{S=A\,\dot\cup\,B\\ A,B\neq\emptyset}} \neg\,\atomf{A}{B},
\]
the disjunction running over unordered partitions of $S$ into two nonempty parts. Then:
\begin{itemize}
\item[\textup{(i)}] $\models_{d+1}\ \subseteq\ \models_{d}$ for every $d\ge0$, and
$\modgeo=\bigcap_{d\ge0}\models_d$;
\item[\textup{(ii)}] for every $0\le d\le n-2$ and every $(d{+}2)$-subset $S\subseteq E$,
the formula $\chi_S$ is valid over scenes in $\R^d$ and refuted by a scene in $\R^{d+1}$;
hence $\models_d\ \supsetneq\ \models_{d+1}$;
\item[\textup{(iii)}] $\models_d\ =\ \modgeo$ for every $d\ge n-1$.
\end{itemize}
Consequently the chain
$\models_0\supsetneq\models_1\supsetneq\cdots\supsetneq\models_{n-2}\supsetneq
\models_{n-1}=\models_{n}=\cdots=\modgeo$
is strict at every step below $n-1$ and constant from $n-1$ onward; the stabilization
threshold is exactly $n-1$.
\end{theorem}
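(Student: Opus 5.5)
The plan treats the three items separately and then assembles the chain.

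For (i), the key step is to embed $\R^d$ into $\R^{d+1}$ as the hyperplane $x_{d+1}=0$. Under this embedding, convex hulls of unions are computed inside the hyperplane, so disjointness is preserved, and every scene in $\R^d$ induces the same relation as its image in $\R^{d+1}$. Hence the atom-valuations realized in dimension $d$ form a subset of those realized in dimension $d+1$. A larger class of models gives a smaller consequence relation, so $\models_{d+1}\subseteq\models_d$. The identity $\modgeo=\bigcap_d\models_d$ is immediate from the definitions: $\modgeo$ quantifies over scenes of every dimension, which is the union over $d$ of the dimension-$d$ model classes.

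For (ii), fix $S$ with $|S|=d+2$ and a scene in $\R^d$.

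\emph{Validity in $\R^d$.} Choose a point $p_e\in C_e$ for each $e\in S$. Radon's theorem \cite{Radon21} applies to these $d+2$ points and yields a partition $S=A\,\dot\cup\,B$ into nonempty parts with meeting point hulls. These point hulls lie inside $\hull(A)$ and $\hull(B)$ respectively, so $(A,B)$ crosses and one disjunct of $\chi_S$ is true. This is exactly the lower-bound argument of Proposition~\ref{prop:simploid}.

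\emph{Refutation in $\R^{d+1}$.}
\begin{itemize}
\item Place the sites of $S$ at the vertices of a $(d+1)$-simplex in $\R^{d+1}$.
\item The upper-bound argument of Proposition~\ref{prop:simploid} shows every disjoint pair within $S$ is apart, so every disjunct $\neg\atomf{A}{B}$ is false.
\item The sites outside $S$ can be placed arbitrarily, say at one of the vertices; atoms involving them do not occur in $\chi_S$.
\end{itemize}
Thus $\chi_S$ witnesses $\emptyset\models_d\chi_S$ but $\emptyset\not\models_{d+1}\chi_S$. Together with (i) this gives the strict inclusion. The hypothesis $d\le n-2$ is what guarantees that a $(d+2)$-subset of $E$ exists.

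For (iii), it suffices by (i) to show $\models_{n-1}\subseteq\modgeo$. Take any scene $\mathcal C$ of any dimension. Its relation is an apartness separoid by Proposition~\ref{prop:threelaws}. By the imported dimension-$(n-1)$ realization of \cite{Strausz03,BS06} (Remark~\ref{rem:dimension}, Proposition~\ref{prop:position}), that separoid is realized by a scene in $\R^{n-1}$. So every valuation arising from some scene already arises in $\R^{n-1}$, and hence $\models_{n-1}\subseteq\modgeo$. For $d\ge n-1$, sandwiching $\modgeo\subseteq\models_d\subseteq\models_{n-1}=\modgeo$ via (i) gives equality.

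The main obstacle is (iii). Everything else follows from Radon's theorem and the embedding, but (iii) rests on the external theorem, and one has to check that its notion of realization matches ours: the strict reading of separation (Lemma~\ref{lem:separation}) and the use of compact convex bodies, for which acyclic separoids correspond via Proposition~\ref{prop:crypto}. If that theorem realizes with points or with arbitrary convex sets, I would first try passing to compact convex bodies, for instance via finitely many witness points in each body, while preserving the relation.
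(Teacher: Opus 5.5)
Your proposal is correct and follows the paper's proof essentially step for step: the hyperplane embedding for (i), Radon's theorem on chosen points together with the simplex placement for (ii), and the imported dimension-$(n-1)$ realization of \cite{Strausz03,BS06} for (iii). The only cosmetic difference is in (iii), where you pass directly from a scene to its separoid and back to $\R^{n-1}$, while the paper routes the same argument through $\modsep=\modgeo$ (Corollary~\ref{cor:invariance}). Your caveat about matching the strict, compact-body notion of realization is an assumption the paper also makes when it imports that theorem.
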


\begin{proof}
(i) An isometric embedding $\R^d\hookrightarrow\R^{d+1}$ carries a scene in $\R^d$ to a
scene in $\R^{d+1}$ whose bodies lie in a hyperplane; convex hulls of subsets of a
hyperplane are computed within it, so the apartness relation is preserved and every
$d$-dimensional countermodel is a $(d{+}1)$-dimensional one. The class of all scenes is the
union over $d$ of the fixed-dimension classes, which gives the intersection formula.

(ii) Validity over $\R^d$: in any scene, choose $p_e\in C_e$ for $e\in S$; these are
$d+2$ points of $\R^d$, so Radon's theorem \cite{Radon21} provides a partition
$S=A\,\dot\cup\,B$ into nonempty parts whose point hulls intersect; the point hulls are
contained in $\hull(A)$ and $\hull(B)$, so $\{A,B\}$ crosses and the corresponding disjunct
of $\chi_S$ holds. Refutation in $\R^{d+1}$: place the sites of $S$ at the vertices of a
$(d{+}1)$-simplex and the remaining sites of $E$ anywhere; by the upper-bound argument of
Proposition~\ref{prop:simploid} every bilateral pair inside $S$ is apart, so every disjunct
of $\chi_S$ fails. With $0\le d\le n-2$ such a subset $S$ exists, and combining with (i)
gives strictness.

(iii) Every apartness separoid over $E$ is realizable in $\R^{\,n-1}$ by the representation
theorem of \cite{Strausz03,BS06}; hence every abstract countermodel to a consequence is
witnessed in dimension $n-1$, so $\models_{n-1}\subseteq\modsep=\modgeo$, the equality by
Corollary~\ref{cor:invariance}. With $\modgeo\subseteq\models_d$ for all $d$ and the chain
of (i), equality holds for all $d\ge n-1$. This is the single point at which a result is
imported; Theorem~\ref{thm:representation} independently yields stabilization from the
finite threshold $\max_\Sigma |M_\Sigma|$, and the imported bound pins the threshold at
$n-1$, which (ii) shows cannot be lowered.
\end{proof}

For $d=1$ and $S=\{a,b,c\}$ the formula $\chi_S$ reads
\begin{multline*}
\neg\atomf{\{a\}}{\{b,c\}}\ \vee\ \neg\atomf{\{b\}}{\{a,c\}}\ \vee\
\neg\atomf{\{c\}}{\{a,b\}}\\
\vee\ \neg\atomf{\{a\}}{\{b\}}\ \vee\
\neg\atomf{\{a\}}{\{c\}}\ \vee\ \neg\atomf{\{b\}}{\{c\}},
\end{multline*}
valid on a line, where three pairwise disjoint intervals are ordered and the middle one is
swallowed by the span of the outer two, and refuted by a nondegenerate triangle of points in
the plane. Theorem~\ref{thm:hierarchy} delimits the scope of everything that follows: the
calculus of the next section axiomatizes the dimension-free relation
$\modgeo=\modsep$, the stable core that no choice of ambient space can disturb, and it
identifies the exact moment at which the ambient dimensions stop being individually
visible to the language.

\section{The calculus}\label{sec:calculus}

A \emph{positive sequent} is an expression $\Gamma\vdz\varphi$ with
$\Gamma\cup\{\varphi\}\subseteq\Atoms_E$.

\begin{definition}[The subsumption calculus $\mathsf{SC}_0$]\label{def:sc0}
The positive sequents are generated by:
\[
\frac{}{\Gamma,\ \atomf{A}{B}\ \vdz\ \atomf{A}{B}}\ (\mathrm{id})
\qquad\qquad
\frac{}{\Gamma\ \vdz\ \atomf{A}{B}}\ (\mathrm{triv})\quad
\text{\small provided $A=\emptyset$ or $B=\emptyset$}
\]
\[
\frac{\Gamma\ \vdz\ \atomf{A}{B}}{\Gamma\ \vdz\ \atomf{B}{A}}\ (\mathrm{sym})
\qquad\qquad
\frac{\Gamma\ \vdz\ \atomf{A}{B}}{\Gamma\ \vdz\ \atomf{A'}{B'}}\ (\mathrm{sub})\quad
\text{\small provided $A'\subseteq A$ and $B'\subseteq B$.}
\]
\end{definition}

The rule (sub) is the proof-theoretic face of (A2): geometric weakening. Note the inversion
relative to logical weakening: shrinking the asserted sets weakens the geometric claim, and,
by Lemma~\ref{lem:threshold}(iii), increases its margin.

\begin{theorem}[Subsumption form of positive entailment]\label{thm:subsumption}
For $\Gamma\cup\{\atomf{A}{B}\}\subseteq\Atoms_E$ the following are equivalent:
\begin{itemize}
\item[\textup{(i)}] $\Gamma\vdz\atomf{A}{B}$;
\item[\textup{(ii)}] $\Gamma\modsep\atomf{A}{B}$;
\item[\textup{(iii)}] $\Gamma\modgeo\atomf{A}{B}$;
\item[\textup{(iv)}] $A=\emptyset$, or $B=\emptyset$, or some $\atomf{A'}{B'}\in\Gamma$ has
$(A,B)\sqsubseteq(A',B')$ or $(A,B)\sqsubseteq(B',A')$.
\end{itemize}
Moreover every derivable positive sequent has a derivation consisting of one instance of
\textup{(id)} or \textup{(triv)} followed by at most one instance of \textup{(sym)} and then
at most one instance of \textup{(sub)}.
\end{theorem}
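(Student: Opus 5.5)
I will prove the cycle of implications (i)$\Rightarrow$(ii)$\Rightarrow$(iv)$\Rightarrow$(i), then obtain (ii)$\Leftrightarrow$(iii) from Corollary~\ref{cor:invariance}. The normal-form claim will come out of the step (iv)$\Rightarrow$(i).

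\emph{Soundness, (i)$\Rightarrow$(ii).} This is an induction on derivations.
\begin{itemize}
\item If $\Sigma$ satisfies $\Gamma$, then (id) is trivially sound.
\item (triv) is sound by (A3), together with (A1) when the right component is empty.
\item (sym) is sound by (A1).
\item (sub) is sound by (A2).
\end{itemize}
Hence every derivable sequent is valid in every apartness separoid.

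\emph{Completeness, (ii)$\Rightarrow$(iv).} I will use a single countermodel, namely the least separoid containing the premises. Put $\Gamma^\flat=\{(A',B') : \atomf{A'}{B'}\in\Gamma\}\subseteq\dy{E}$ and $\Sigma_\Gamma=(E,\cl(\Gamma^\flat))$. By Lemma~\ref{lem:generation}, $\Sigma_\Gamma$ is an apartness separoid, and it contains $\Gamma^\flat$, so $\Sigma_\Gamma\models\Gamma$. Assumption (ii) then gives $(A,B)\in\cl(\Gamma^\flat)$. Unfolding Definition~\ref{def:closure}, this is literally condition (iv). This step carries the whole content, but it is short because Lemma~\ref{lem:generation} already did the work.

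\emph{Derivability, (iv)$\Rightarrow$(i), in normal form.} There are three cases.
\begin{itemize}
\item If $A=\emptyset$ or $B=\emptyset$, one instance of (triv) suffices.
\item If $(A,B)\sqsubseteq(A',B')$ for some premise $\atomf{A'}{B'}\in\Gamma$, write $\Gamma=\Gamma',\atomf{A'}{B'}$. Apply (id) to get $\Gamma\vdz\atomf{A'}{B'}$, then one (sub); the (sub) is omitted when the pairs are equal.
\item If $(A,B)\sqsubseteq(B',A')$, apply (id), then (sym) to get $\atomf{B'}{A'}$, then one (sub).
\end{itemize}
Each derivation has the announced shape: one axiom, at most one (sym), at most one (sub). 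Since every derivable sequent satisfies (iv) by the preceding two steps, every derivable sequent admits such a normal derivation.

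\emph{(ii)$\Leftrightarrow$(iii).} This is the special case of Corollary~\ref{cor:invariance} for atomic $\Gamma\cup\{\varphi\}$. Alternatively, (iii)$\Rightarrow$(iv) can be shown directly by applying Theorem~\ref{thm:representation} to $\Sigma_\Gamma$, which yields a scene refuting $\atomf{A}{B}$ whenever (iv) fails.

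The only step needing care is the fine print of the normal form. If the proof is read as rule applications, (sub) with $A'=A$, $B'=B$ is the identity, so "at most one" is accurate. I must also check that the ordering (sym) before (sub) suffices, which it does because a swapped domination is handled by swapping first and then shrinking. Everything else is bookkeeping.
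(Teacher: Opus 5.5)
Your proposal is correct and follows the paper's proof essentially step for step. The matching steps are: rule-by-rule soundness for (i)$\Rightarrow$(ii); the minimal model $\Sigma_\Gamma$ with $\apart_{\Sigma_\Gamma}=\cl(\Gamma^\flat)$ from Lemma~\ref{lem:generation} as the single countermodel for (ii)$\Rightarrow$(iv); the three-case derivation giving both (iv)$\Rightarrow$(i) and the normal form; and Corollary~\ref{cor:invariance} for (ii)$\Leftrightarrow$(iii), where your remark that (triv) with an empty right component also needs (A1) is a small point of care the paper's one-line soundness check passes over.
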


\begin{proof}
(iv)$\Rightarrow$(i) and the normal form: vacuous targets are (triv) axioms; a direct
domination is (id) followed by one (sub); a swapped domination is (id), one (sym), one
(sub). (i)$\Rightarrow$(ii): each rule is sound over apartness separoids: (triv) by (A3),
(sym) by (A1), (sub) by (A2), (id) trivially. (ii)$\Rightarrow$(iv): let
$\Gamma^\circ=\{(A',B'):\atomf{A'}{B'}\in\Gamma\}$ and consider the \emph{minimal model}
$\Sigma_\Gamma$ with $\apart_{\Sigma_\Gamma}=\cl(\Gamma^\circ)$, an apartness separoid by
Lemma~\ref{lem:generation} satisfying every atom of $\Gamma$. If (iv) fails, then by
Definition~\ref{def:closure} the pair $(A,B)$ is not in $\cl(\Gamma^\circ)$, so
$\Sigma_\Gamma\not\models\atomf{A}{B}$, refuting (ii).
(ii)$\Leftrightarrow$(iii) is Corollary~\ref{cor:invariance}.
\end{proof}

\begin{corollary}[Cut admissibility]\label{cor:cut}
If $\Gamma\vdz\delta$ and $\Gamma\cup\{\delta\}\vdz\varphi$, then $\Gamma\vdz\varphi$.
\end{corollary}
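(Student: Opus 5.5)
The plan is to route the argument through the semantic characterization of Theorem~\ref{thm:subsumption}, which turns derivability into a transitivity statement for the closure operator. No induction on derivations is needed. Write $\delta=\atomf{C}{D}$ and $\varphi=\atomf{A}{B}$. By (i)$\Leftrightarrow$(iv) of Theorem~\ref{thm:subsumption}, the hypothesis $\Gamma\vdz\delta$ says $(C,D)\in\cl(\Gamma^\circ)$. The hypothesis $\Gamma\cup\{\delta\}\vdz\varphi$ says $(A,B)\in\cl(\Gamma^\circ\cup\{(C,D)\})$. The goal is $(A,B)\in\cl(\Gamma^\circ)$.

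The key step is idempotence of closure in the form $\cl(\Gamma^\circ\cup\{(C,D)\})\subseteq\cl(\Gamma^\circ)$ whenever $(C,D)\in\cl(\Gamma^\circ)$. This follows from Lemma~\ref{lem:generation}. The set $\cl(\Gamma^\circ)$ is an apartness separoid containing both $\Gamma^\circ$ and $(C,D)$. The set $\cl(\Gamma^\circ\cup\{(C,D)\})$ is the least such separoid, so it lies inside $\cl(\Gamma^\circ)$.

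A semantic phrasing of the same argument is also available. By soundness, every apartness separoid satisfying $\Gamma$ satisfies $\delta$, and hence satisfies $\Gamma\cup\{\delta\}$ and therefore $\varphi$. This gives $\Gamma\modsep\varphi$, and completeness, (ii)$\Rightarrow$(i), then gives $\Gamma\vdz\varphi$.

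If a direct syntactic argument is wanted, I would instead do a case analysis on the witness for $(A,B)$. If $(A,B)$ is vacuous, or is dominated by a member of $\Gamma^\circ$, nothing changes. If it is dominated, possibly after a swap, by $(C,D)$, then I chain this with the witness for $(C,D)$. Either $(C,D)$ is vacuous, in which case the dominated pair $(A,B)$ is vacuous too, or $(C,D)$ is dominated, possibly after a swap, by some $(A',B')\in\Gamma^\circ$. Composing the two dominations, with swaps composing to a swap or to the identity, yields a normal-form derivation of height at most three.

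I expect no real obstacle. The only care needed is bookkeeping of the swap cases in the syntactic variant, together with noting that vacuity is inherited downward under $\sqsubseteq$.
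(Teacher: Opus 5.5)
Your proposal is correct. Its main argument takes a different route from the paper's.

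The paper applies criterion (iv) of Theorem~\ref{thm:subsumption} to the second sequent and composes dominations by hand. If $\varphi$ is dominated, directly or after a swap, by $\delta$, then either $\delta$ is dominated by a premise in $\Gamma$ and the two dominations compose, or $\delta$ is vacuous and so is $\varphi$. This is exactly your fallback ``direct syntactic'' variant.

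Your primary argument instead reads the two hypotheses as $(C,D)\in\cl(\Gamma^\circ)$ and $(A,B)\in\cl(\Gamma^\circ\cup\{(C,D)\})$. It then concludes by the leastness clause of Lemma~\ref{lem:generation}, so cut becomes idempotence of $\cl$ as a closure operator. Your semantic phrasing goes further: it shows cut is just transitivity of $\modsep$ carried across soundness and completeness. That argument would work for any calculus that is sound and complete for some class of models.

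What your route buys is brevity and a structural explanation. The swap-and-inclusion bookkeeping is not avoided, though; it is paid once, inside the proof that $\cl(\Gamma)$ satisfies (A1) and (A2). The paper's case analysis stays local to criterion (iv) and shows where the witness for $\varphi$ comes from: the premise that witnesses $\delta$. That is the concrete cut-reduction on normal-form derivations, which your leastness argument guarantees exists but does not name.
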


\begin{proof}
Apply criterion (iv) to the second sequent. If $\varphi$ is vacuous or dominated within
$\Gamma$, the conclusion is immediate. If $\varphi$ is dominated by
$\delta=\atomf{D}{D'}$, two cases remain. When $\delta$ is itself dominated by some member
of $\Gamma$, compose the dominations: inclusion is transitive and a double swap is the
identity, so the four direct or swapped combinations reduce to a direct or a swapped
domination by that member. When $\delta$ is vacuous, say $D=\emptyset$, a direct domination
forces the first component of $\varphi$ to be empty and a swapped one forces its second
component to be empty, so $\varphi$ is vacuous. In all cases (iv) holds for
$\Gamma\vdz\varphi$.
\end{proof}

\begin{remark}[Interpolation trivializes]\label{rem:interpolation}
If $\atomf{A}{B}\vdz\atomf{A'}{B'}$ and the conclusion is not vacuous, then by criterion
(iv) $A'\cup B'\subseteq A\cup B$: the conclusion mentions only sites of the premise and is
its own interpolant. The calculus has no transversal content to interpolate away; all
content is containment.
\end{remark}

\begin{definition}[The Boolean calculus $\mathsf{AC}_E$]\label{def:ac}
$\mathsf{AC}_E$ is any standard complete Hilbert system for classical propositional logic
over the atoms $\Atoms_E$, extended by the axiom schemes, ranging over $\dy{E}$:
\[
(\mathrm T)\ \ \atomf{A}{B}\ \ \text{\small for $A=\emptyset$ or $B=\emptyset$};\qquad
(\mathrm S)\ \ \atomf{A}{B}\rightarrow\atomf{B}{A};
\]
\[
(\mathrm W)\ \ \atomf{A}{B}\rightarrow\atomf{A'}{B'}\ \ \text{\small for
$A'\subseteq A$, $B'\subseteq B$}.
\]
Derivability from a set $\Gamma\subseteq\Lang_E$ is written $\Gamma\vdash\varphi$.
\end{definition}

\begin{theorem}[Soundness, completeness, decidability]\label{thm:completeness}
For all $\Gamma\cup\{\varphi\}\subseteq\Lang_E$:
\[
\Gamma\vdash\varphi
\iff \Gamma\modsep\varphi
\iff \Gamma\modgeo\varphi,
\]
and these relations are decidable.
\end{theorem}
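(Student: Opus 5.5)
The plan is to reduce everything to classical propositional logic over the finite atom set $\Atoms_E$, using the axiom schemes (T), (S), (W) to pin down exactly the valuations that come from apartness separoids. The second equivalence $\Gamma\modsep\varphi\iff\Gamma\modgeo\varphi$ is already Corollary~\ref{cor:invariance}, so the work is the first equivalence and decidability. Let $\mathrm{Ax}_E$ be the (finite, since $\dy{E}$ is finite) set of all instances of (T), (S), (W). By the deduction theorem and the way $\mathsf{AC}_E$ is defined, $\Gamma\vdash\varphi$ holds iff $\Gamma\cup\mathrm{Ax}_E\vdash_{\mathrm{CPL}}\varphi$, where $\vdash_{\mathrm{CPL}}$ is derivability in the underlying propositional Hilbert system (axiom schemes here are schemes over $\dy{E}$, not over formulas, so they are just finitely many extra premises).

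\textbf{Key step: valuation correspondence.} A truth assignment $v:\Atoms_E\to\{0,1\}$ satisfies $\mathrm{Ax}_E$ if and only if the relation $\apart_v=\{(A,B):v(\atomf{A}{B})=1\}$ satisfies (A1)--(A3). Indeed, (T) is literally (A3), (S) is (A1), and (W) is (A2); each instance holds under $v$ exactly when the corresponding closure condition holds for that particular pair. Conversely, every apartness separoid $\Sigma$ induces the valuation $v_\Sigma(\atomf{A}{B})=1\iff(A,B)\in\apart_\Sigma$, and the two maps are mutually inverse. Since a separoid evaluates compound formulas classically through its atoms (Definition~\ref{def:language}), $\Sigma\models\psi$ iff $v_\Sigma\models\psi$ for every $\psi\in\Lang_E$.

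\textbf{Assembling.} Soundness: each (T), (S), (W) instance is true in every apartness separoid by the correspondence, and classical rules preserve truth under any valuation, so $\Gamma\vdash\varphi$ implies $\Gamma\modsep\varphi$. Completeness: if $\Gamma\not\vdash\varphi$, then $\Gamma\cup\mathrm{Ax}_E\not\vdash_{\mathrm{CPL}}\varphi$. By strong completeness of classical propositional logic (which holds for arbitrary $\Gamma$; indeed, up to equivalence $\Gamma$ is finite because $\Atoms_E$ is finite), some valuation $v$ satisfies $\Gamma\cup\mathrm{Ax}_E$ but not $\varphi$. Then $\apart_v$ is an apartness separoid satisfying $\Gamma$ and refuting $\varphi$, so $\Gamma\not\modsep\varphi$. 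Decidability: only the atoms occurring in $\Gamma\cup\{\varphi\}$ matter syntactically, but the separoid constraints can involve other atoms. So the decision procedure enumerates all $2^{3^{|E|}}$ subsets of $\dy{E}$, keeps those satisfying (A1)--(A3), and checks $\Gamma$ and $\varphi$ under each. For infinite $\Gamma$ this requires $\Gamma$ to be given effectively; I would state decidability for finite $\Gamma$, noting that any $\Gamma$ is equivalent to a finite subset, namely one representative per equivalence class of formulas over finitely many atoms.

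\textbf{Main obstacle.} The only delicate point is bookkeeping rather than depth. One must make sure that ``standard Hilbert system extended by schemes'' yields the deduction-theorem reduction, with the instances treated as theorems usable anywhere and modus ponens as the only rule, and one must handle possibly infinite $\Gamma$ cleanly. The geometric content enters solely through Corollary~\ref{cor:invariance}, which I will cite explicitly so that the routine propositional completeness is visibly separated from the representation theorem.
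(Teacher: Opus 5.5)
Your proposal is correct and follows essentially the same route as the paper: you identify the Boolean valuations satisfying all (T), (S), (W) instances with the apartness separoids, use propositional completeness over the finite atom set for the first equivalence, cite Corollary~\ref{cor:invariance} for the second, and obtain decidability from the finite valuation space. One small precision: the instances $\atomf{A}{\emptyset}$ of (T) are not literally (A3) but follow from (A3) together with (A1), so the correspondence holds for the scheme set as a whole rather than instance by instance; that whole-set correspondence is all your argument uses.
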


\begin{proof}
A Boolean valuation $v$ of $\Atoms_E$ is the same thing as a relation
$\apart_v\subseteq\dy{E}$, and $v$ satisfies all instances of (T), (S), (W) precisely when
$\apart_v$ satisfies (A3), (A1), (A2): for (S) note that the instances at $(A,B)$ and at
$(B,A)$ jointly give the biconditional. Hence the models of the scheme set, among Boolean
valuations, are exactly the apartness separoids. Soundness of $\vdash$ for $\modsep$
follows, the propositional part being classically sound and the schemes valid by
Proposition~\ref{prop:threelaws} read abstractly. For completeness, suppose
$\Gamma\modsep\varphi$. The atom set $\Atoms_E$ is finite, so by completeness of the
propositional base, $\varphi$ is derivable from $\Gamma$ together with the finitely many
scheme instances if and only if every Boolean valuation satisfying both satisfies
$\varphi$; and the valuations satisfying the scheme instances are exactly the apartness
separoids, over which $\Gamma\models\varphi$ holds by hypothesis. The second equivalence is
Corollary~\ref{cor:invariance}. Decidability follows from finiteness of the valuation
space; the sharp account is Theorem~\ref{thm:np}.
\end{proof}

\begin{remark}[Where the content lies]\label{rem:content}
The first equivalence of Theorem~\ref{thm:completeness} is completeness relative to a
finite-alphabet propositional base and is, as such, routine; we state it because the
calculus needs a name. The content of the theorem is the identification of the models of
the scheme set with the apartness separoids, and the second equivalence, which rests
entirely on Theorem~\ref{thm:representation}. Stripped of scaffolding, the theorem says:
Euclidean separation talk proves nothing that the three schemes do not already prove.
\end{remark}

\begin{proposition}[The calculus as a syntactic shadow]\label{prop:shadow}
For every positive data set $\Gamma\subseteq\Atoms_E$, the atomic consequences of the Boolean
calculus are exactly the atoms true in the minimal separoid $\Sigma_\Gamma$ generated by
$\Gamma$:
\[
\{\alpha\in\Atoms_E:\Gamma\vdash\alpha\}
=
\{\alpha\in\Atoms_E:\Sigma_\Gamma\models\alpha\}
=
\cl(\Gamma^\circ).
\]
Consequently the logical infrastructure is conservative over the geometric closure operator:
classical propositional reasoning can select among admissible valuations, but it does not
create a new positive apartness fact from positive data.
\end{proposition}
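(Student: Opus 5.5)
The plan is to prove the two displayed equalities in turn, reducing everything to Theorem~\ref{thm:completeness} and Theorem~\ref{thm:subsumption}. Write $\Gamma^\circ=\{(A',B'):\atomf{A'}{B'}\in\Gamma\}$ as in the proof of Theorem~\ref{thm:subsumption}. By Lemma~\ref{lem:generation}, $\Sigma_\Gamma$ is the apartness separoid with $\apart_{\Sigma_\Gamma}=\cl(\Gamma^\circ)$. Since a separoid satisfies an atom exactly when the underlying pair lies in its relation, the second equality holds by definition: identify each atom $\atomf{A}{B}$ with the pair $(A,B)$.

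For the first equality, I first use the soundness and completeness in Theorem~\ref{thm:completeness}. Because $\Gamma$ consists of atoms, it gives $\Gamma\vdash\alpha\iff\Gamma\modsep\alpha$ for every atom $\alpha$. Theorem~\ref{thm:subsumption}, (ii)$\Leftrightarrow$(iv), then says that $\Gamma\modsep\atomf{A}{B}$ holds exactly when $(A,B)$ is vacuous, or is dominated by a member of $\Gamma^\circ$ directly or after a swap. Comparing with Definition~\ref{def:closure}, this is precisely the condition $(A,B)\in\cl(\Gamma^\circ)$. Chaining these equivalences gives the first set equality.

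As an independent check that avoids Theorem~\ref{thm:subsumption}, I would argue both inclusions directly. For $\subseteq$: $\Sigma_\Gamma$ satisfies every atom of $\Gamma$ and, being an apartness separoid, every instance of (T), (S) and (W). Soundness then makes every derivable atom true in $\Sigma_\Gamma$. For $\supseteq$: each pair in $\cl(\Gamma^\circ)$ is derivable by one of three short Hilbert derivations. A vacuous pair is an instance of (T). A directly dominated pair follows from a member of $\Gamma$ by (W) and modus ponens. A swap-dominated pair is handled the same way, with (S) applied first.

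For the closing ``consequently'' clause, I would read it as the content of the first equality. Any atomic consequence of positive data is already in $\cl(\Gamma^\circ)$, so classical reasoning adds nothing atomic. The phrase ``select among admissible valuations'' only notes that $\Gamma$ has other models besides $\Sigma_\Gamma$. These extra models can make further atoms true, but not in all models at once.

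I expect no genuine obstacle, since this is essentially a corollary. The one delicate point is the minimality argument: it must be clear that $\Sigma_\Gamma$ is itself a model of $\Gamma$ together with all the schemes, so that an atom false in $\Sigma_\Gamma$ cannot be derivable. That is exactly what Lemma~\ref{lem:generation} provides.
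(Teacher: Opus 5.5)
Your proposal is correct and follows the paper's own route. Theorem~\ref{thm:completeness} reduces $\Gamma\vdash\alpha$ to $\Gamma\modsep\alpha$, the minimal-model argument behind Theorem~\ref{thm:subsumption} identifies this with membership in $\cl(\Gamma^\circ)$, and Lemma~\ref{lem:generation} supplies the second equality. Your ``independent check'' is a worthwhile bonus: soundness in the single valuation $\Sigma_\Gamma$, together with the three explicit Hilbert derivations, proves the first equality without using completeness at all.
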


\begin{proof}
The first equality is Theorem~\ref{thm:completeness} restricted to atomic conclusions and
the model $\Sigma_\Gamma$ used in the proof of Theorem~\ref{thm:subsumption}; the second is
Lemma~\ref{lem:generation}. The final sentence is just this equality read syntactically.
\end{proof}

\begin{remark}[Robustness under alphabet extension]\label{rem:alphabet}
Satisfiability and validity do not depend on the ambient alphabet. Let $E\subseteq E'$ and
let $\varphi\in\Lang_E\subseteq\Lang_{E'}$. If a structure over $E'$ satisfies $\varphi$,
its restriction to $\dy{E}$ is an apartness separoid over $E$ assigning the same values to
all atoms of $\varphi$. Conversely, if $\Sigma$ over $E$ satisfies $\varphi$, then
$\cl_{E'}(\apart_\Sigma)$ is a structure over $E'$ whose restriction to $\dy{E}$ is exactly
$\apart_\Sigma$: a pair over $E$ is dominated by a pair over $E$, possibly after a swap, in
$\dy{E'}$ if and only if it is so dominated in $\dy{E}$. Hence the decision problems of the
next section are well posed without fixing $E$ in advance.
\end{remark}

\section{Complexity}\label{sec:complexity}

We adopt the explicit encoding: a formula of $\Lang_E$ lists each atom with its two
components written out as sets of site identifiers; the input size $|\varphi|$ is the total
length. The key combinatorial fact is that local consistency of an atom valuation is global
realizability.

\begin{lemma}[Extension]\label{lem:extension}
Let $T\subseteq\Atoms_E$ be finite and $v:T\to\{0,1\}$. There is an apartness separoid
(hence, by Theorem~\ref{thm:representation}, a scene of rational polytopes) agreeing with
$v$ on $T$ if and only if for every $\beta=\atomf{A}{B}\in T$ with $v(\beta)=0$: the pair
$(A,B)$ is bilateral and is not dominated, directly or after a swap, by any $(A',B')$ with
$\atomf{A'}{B'}\in T$ and $v(\atomf{A'}{B'})=1$. The condition is checkable in time
$O(|T|^2\cdot\ell)$, where $\ell$ bounds the encoding length of an atom.
\end{lemma}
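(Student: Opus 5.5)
The plan is to reduce everything to the minimal model of Lemma~\ref{lem:generation}. Write $T_1=\{(A',B'):\atomf{A'}{B'}\in T,\ v(\atomf{A'}{B'})=1\}$ and $T_0$ for the pairs of the atoms that $v$ sends to $0$.

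For necessity, suppose some apartness separoid $\Sigma$ agrees with $v$ on $T$. Then $T_1\subseteq\apart_\Sigma$, so leastness in Lemma~\ref{lem:generation} gives $\cl(T_1)\subseteq\apart_\Sigma$. Each $(A,B)\in T_0$ lies outside $\apart_\Sigma$, hence outside $\cl(T_1)$. Unwinding Definition~\ref{def:closure}, this means $(A,B)$ is not vacuous, so it is bilateral, and it is not dominated, directly or after a swap, by any member of $T_1$. This is exactly the stated condition.

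For sufficiency, take $\Sigma$ with $\apart_\Sigma=\cl(T_1)$, which is an apartness separoid by Lemma~\ref{lem:generation}. Atoms with value $1$ are true in $\Sigma$ because $T_1\subseteq\cl(T_1)$. Atoms with value $0$ are false, because the hypothesis says precisely that each pair of $T_0$ fails every clause of Definition~\ref{def:closure}. Thus $\Sigma$ agrees with $v$ on $T$. Theorem~\ref{thm:representation}, or Corollary~\ref{cor:invariance}, then converts $\Sigma$ into a scene of rational polytopes realizing the same relation, so the scene also agrees with $v$.

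For the complexity bound, the check loops over pairs $(\beta,\beta')\in T_0\times T_1$. For each such pair it performs four subset tests: $A\subseteq A'$ and $B\subseteq B'$ for direct domination, and $A\subseteq B'$ and $B\subseteq A'$ for swapped domination. The bilaterality test on each $\beta\in T_0$ is a separate linear scan. The one point needing care is performing each subset test in $O(\ell)$ rather than $O(\ell^2)$. I would first normalize every component once, by sorting it or by converting it to a bit vector over the sites that occur in $T$, and then do each inclusion test as a linear merge. If sorting is charged to the preprocessing step, its cost is $O(|T|\,\ell\log\ell)$, which is dominated by $O(|T|^2\ell)$ whenever $\log\ell\le|T|$. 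Otherwise I would use marking arrays indexed by site identifiers, which give $O(\ell)$ per test. The pair loop then costs $O(|T|^2\ell)$ in total.

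I expect no real obstacle. The only subtle step is making sure the sufficiency model is well defined independently of atoms outside $T$, and taking the least closure handles this.
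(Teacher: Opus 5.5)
Your proof is correct and follows essentially the same route as the paper: sufficiency via the minimal model $\cl(T_1)$ of Lemma~\ref{lem:generation}, necessity from the axioms (which you package as leastness of $\cl$, an equivalent formulation), and the bound from four subset tests per pair of a $0$-atom and a $1$-atom. Your extra care about performing each inclusion test in $O(\ell)$ time, by sorting or by marking arrays, fills in a detail the paper leaves implicit.
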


\begin{proof}
Necessity: in any apartness separoid, vacuous atoms are true by (A3) and pairs dominated by
true pairs are true by (A1) and (A2). Sufficiency: let $\Sigma$ have
$\apart_\Sigma=\cl(\{(A',B'):v(\atomf{A'}{B'})=1\})$. By Lemma~\ref{lem:generation} this is
an apartness separoid; atoms set to $1$ are satisfied because $\cl$ contains its
generators; an atom $\beta$ set to $0$ is falsified because, by
Definition~\ref{def:closure}, membership of its pair in the closure would mean exactly
vacuity or domination by a generator, both excluded by the condition. The check compares
each $0$-atom against each $1$-atom with four subset tests on explicitly listed sets.
\end{proof}

\begin{theorem}\label{thm:np}
Satisfiability of formulas of $\Lang_E$ over scenes (equivalently, by
Corollary~\ref{cor:invariance}, over apartness separoids) is NP-complete; validity is
coNP-complete; positive entailment $\Gamma\vdz\varphi$ is decidable in time
$O(|\Gamma|+|\varphi|)$ up to the cost of set comparisons, hence in linear time for sorted
encodings.
\end{theorem}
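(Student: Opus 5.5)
For membership in NP, a certificate for satisfiability of $\varphi$ will be a truth assignment $v$ to the finitely many atoms $T$ occurring in $\varphi$; $|T|\le|\varphi|$. The verifier does two things. First, it evaluates $\varphi$ under $v$ in linear time. Second, it checks the local consistency condition of Lemma~\ref{lem:extension} in time $O(|T|^2\ell)$, polynomial in $|\varphi|$. By that lemma, $v$ passes exactly when some apartness separoid agrees with $v$ on $T$. By Theorem~\ref{thm:representation}, equivalently Corollary~\ref{cor:invariance}, some scene of rational polytopes then agrees with $v$ on $T$. That scene satisfies $\varphi$, since the truth of $\varphi$ depends only on the atoms in $T$. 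Conversely, any scene satisfying $\varphi$ induces such a $v$. Remark~\ref{rem:alphabet} makes this independent of the ambient alphabet $E$, so no exponential object over $\dy{E}$ is ever written down. Validity is coNP: $\varphi$ is valid iff $\neg\varphi$ is unsatisfiable.

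For hardness I would reduce from propositional SAT. Given a CNF formula $\psi$ over variables $x_1,\dots,x_m$, choose $2m$ fresh sites $a_j,b_j$ and substitute the atom $\atomf{\{a_j\}}{\{b_j\}}$ for $x_j$. This is polynomial, since each atom has length $O(\log m)$. The key claim is that these atoms are independent: every assignment to them satisfies the condition of Lemma~\ref{lem:extension}. Each such atom is bilateral. A false atom $\atomf{\{a_j\}}{\{b_j\}}$ could only be dominated, directly or after a swap, by a pair containing both $a_j$ and $b_j$ on opposite sides. The only listed atom with that property is $\atomf{\{a_j\}}{\{b_j\}}$ itself, which is false. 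Hence $\psi$ is satisfiable iff its translation is satisfiable over separoids, hence over scenes. This gives NP-hardness of satisfiability. Applying the same substitution to tautology-checking, or to negations, gives coNP-hardness of validity.

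For positive entailment I use Theorem~\ref{thm:subsumption}(iv). Decide $\Gamma\vdz\atomf{A}{B}$ by first checking vacuity of the target. Then scan $\Gamma$ once, testing for each premise $\atomf{A'}{B'}$ the four inclusions behind $(A,B)\sqsubseteq(A',B')$ and $(A,B)\sqsubseteq(B',A')$. With sorted site lists, each inclusion is a merge costing $O(|A|+|A'|)$ or similar. The cost charged to one premise is therefore linear in its length plus the target length.

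The one point I expect to need care is this: a naive sum re-reads the target once per premise, giving $O(|\Gamma|\cdot|\varphi|)$ rather than $O(|\Gamma|+|\varphi|)$. I would first try to fix this by marking the sites of $A$ and $B$ in a hash or bit table, built once in time $O(|\varphi|)$, and then for each premise counting the marked elements on each of its sides. Domination holds iff the counts on the relevant sides equal $|A|$ and $|B|$, which is computable in time linear in the premise. If a lookup structure over site identifiers is disallowed, the statement's caveat "up to the cost of set comparisons" absorbs the remaining factor, and the sorted-encoding linear bound follows.
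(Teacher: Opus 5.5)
Your proposal is correct and follows the paper's proof essentially step for step. Membership in NP is by guessing $v$ on the atoms of $\varphi$ and checking the condition of Lemma~\ref{lem:extension}; hardness is by substituting the independent singleton atoms $\atomf{\{a_j\}}{\{b_j\}}$ over a fresh alphabet; coNP follows by duality; and positive entailment is a single scan implementing criterion~(iv) of Theorem~\ref{thm:subsumption}.

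Your concern about re-reading the target is more careful than the paper, which only counts constantly many subset tests per premise. However, your closing sentence about the caveat does not by itself yield the sorted-encoding linear bound. The fix is simple: reject whenever $|A|>|A'|$ before merging. Then each sorted inclusion test costs $O(|A'|)$ and is charged to the premise, which gives $O(|\Gamma|+|\varphi|)$ without any lookup table.
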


\begin{proof}
\emph{Membership.} Guess $v$ on the atoms occurring in $\varphi$, verify the Boolean
evaluation, and verify the condition of Lemma~\ref{lem:extension}; all in time polynomial
in $|\varphi|$, and correct by that lemma. Validity is the complement of satisfiability of
the negation. Positive entailment is criterion (iv) of Theorem~\ref{thm:subsumption}: one
vacuity test and one scan of $\Gamma$ with constantly many subset tests per premise.

\emph{Hardness.} Reduce propositional satisfiability. Given a formula $\theta$ over
variables $x_1,\dots,x_m$, take $E_\theta=\{a_1,b_1,\dots,a_m,b_m\}$, fresh and pairwise
distinct, and let $\theta^*$ be $\theta$ with each $x_i$ replaced by the atom
$\alpha_i=\atomf{\{a_i\}}{\{b_i\}}$. The map is computable in linear time, and
Remark~\ref{rem:alphabet} licenses the change of alphabet. If a scene satisfies
$\theta^*$, the truth values of the $\alpha_i$ satisfy $\theta$. Conversely, given a
satisfying assignment $w$ of $\theta$, set $v(\alpha_i)=w(x_i)$; the family
$\{\alpha_i\}$ meets the condition of Lemma~\ref{lem:extension} under \emph{every} $v$,
because all components are nonempty singletons and a domination
$(\{a_j\},\{b_j\})\sqsubseteq(\{a_i\},\{b_i\})$, direct or swapped, forces equality of
singletons across a partitioned alphabet, hence $i=j$. So some apartness separoid, and
therefore some scene of rational polytopes, realizes $v$ and satisfies $\theta^*$. Thus
$\theta$ is satisfiable if and only if $\theta^*$ is, giving NP-hardness; the validity
statement is dual via $\theta\mapsto\neg\theta^*$.
\end{proof}

The asymmetry deserves a sentence. Positive information about apartness composes only by
subsumption and is decided by a scan, while the hardness enters exclusively through Boolean
combination, that is, through the freedom to demand that certain pairs \emph{cross}.
Geometry supplies that freedom wholesale (Lemma~\ref{lem:extension}) and adds nothing else
(Corollary~\ref{cor:invariance}).

\section{Stratification and conservativity}\label{sec:stratification}

The calculus $\mathsf{AC}_E$ was presented in a particular discipline, which we now make
explicit and exploit. Assign \emph{grades}: sites and their sets have grade $0$; atoms have
grade $1$; compound formulas have grade $2$. The discipline is in the spirit of
stratification in logic programming \cite{ABW88}, transposed from recursion through
negation to the relationship between a calculus and its subject matter.

\begin{definition}[Stratified presentation]\label{def:discipline}
A presentation of a calculus over $\Lang_E$ is \emph{stratified} when its objects are assigned
grades and the following three conditions hold. For a scheme $S$, let $\mathrm{concl}(S)$ be
the set of grades of the formulas that $S$ derives, and let $\mathrm{exh}_g(S)$ be the set of
closed objects of grade $g$ that occur as displayed constants in the statement of $S$.
\begin{itemize}
\item[\textup{(D1)}] every axiom and rule is schematic, with all metavariables typed by
 grade;
\item[\textup{(D2)}] every side condition of a scheme constrains only objects whose grades
are strictly below every grade in $\mathrm{concl}(S)$;
\item[\textup{(D3)}] for every $g\in\mathrm{concl}(S)$, $\mathrm{exh}_g(S)=\emptyset$.
\end{itemize}
Thus a scheme may range over the grade it derives, but it may not contain a closed instance
of that grade inside its own statement.
\end{definition}

\begin{proposition}\label{prop:disciplineholds}
The presentation of $\mathsf{AC}_E$ in Definition~\ref{def:ac}, with positive sequents
governed by Definition~\ref{def:sc0}, is stratified.
\end{proposition}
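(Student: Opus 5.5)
The plan is a direct, scheme-by-scheme verification of (D1)--(D3). The first step fixes the grading: sites and subsets of $E$ get grade $0$, atoms $\atomf{A}{B}$ get grade $1$, and compound formulas and sequent contexts get grade $2$. The schemes to examine are the propositional base of $\mathsf{AC}_E$ (taken as the usual schematic axioms together with modus ponens), the three extra schemes (T), (S), (W) of Definition~\ref{def:ac}, and the four rules (id), (triv), (sym), (sub) of Definition~\ref{def:sc0}. For each scheme I would record three things: its metavariables with their grades, the grades in $\mathrm{concl}(S)$, and any side condition together with the grades of the objects it mentions.

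\textbf{(D1).} This is read off the displays. Every scheme is stated with metavariables $A,B,A',B'$ ranging over grade-$0$ objects and $\Gamma,\varphi,\psi,\chi$ ranging over grade-$1$ or grade-$2$ objects. No scheme is tied to a particular pair in $\dy{E}$.

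\textbf{(D2).} The side conditions are few:
\begin{itemize}
\item (T) and (triv) require ``$A=\emptyset$ or $B=\emptyset$'';
\item (W) and (sub) require ``$A'\subseteq A$, $B'\subseteq B$'';
\item the requirement that the pairs lie in $\dy{E}$ is disjointness of $A$ and $B$.
\end{itemize}
All of these constrain only grade-$0$ objects. The conclusions of (T), (id), (triv), (sym), (sub) have grade $1$, and those of (S), (W) and the propositional schemes have grade $2$, so in every case $0$ lies strictly below $\min\mathrm{concl}(S)$. The propositional base carries no side conditions at all.

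\textbf{(D3).} This is the step needing care, and I expect it to be the main obstacle. The worry is the scheme (T): its statement displays the constant $\emptyset$, and for $|E|$ small one might read an instance as a closed atom. The argument I would give is as follows.
\begin{itemize}
\item $\emptyset$ is a closed object of grade $0$, not of the derived grade $1$. Every atom displayed in (T), (triv), (id), (sym), (sub) is built from metavariables, so it is not closed. Hence $\mathrm{exh}_1(S)=\emptyset$ for these schemes.
\item For the grade-$2$ schemes, the displayed formulas are built from metavariables by connectives, so no closed compound formula occurs. Hence $\mathrm{exh}_2(S)=\emptyset$.
\item In the propositional base I would choose a presentation without $\top$ or $\bot$ constants (the language of Definition~\ref{def:language} has none), and note this choice explicitly.
\end{itemize}
The subtle point to state carefully is that (D3) concerns closed objects occurring as displayed constants in the statement of a scheme, not instances of that scheme. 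Instances of (T) are closed atoms, but they are never displayed in the scheme itself.
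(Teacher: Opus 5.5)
Your proposal is correct and matches the paper's proof: both check each scheme directly. Every side condition (emptiness, inclusion, transposition or disjointness) constrains only grade-$0$ material while conclusions have grade $1$ or $2$, and no scheme displays a closed atom or compound. Your extra care over the constant $\emptyset$ in (T), which is grade $0$ and so irrelevant to $\mathrm{exh}_1$, and over keeping $\top,\bot$ out of the propositional base sharpens the paper's one-line treatment of (D3). One small slip: modus ponens can derive atoms, so its conclusion grades are $\{1,2\}$ rather than $\{2\}$, which changes nothing since it has no side condition.
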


\begin{proof}
All axioms and rules are schemes (D1). Side conditions: (triv) and (T) test emptiness of a
grade-$0$ set; (sym) and (S) perform a transposition of grade-$0$ data; (sub) and (W) test
inclusions between grade-$0$ sets; the propositional rules carry no side conditions. In
each case the constrained material is of grade $0$ while the derived formulas are of grade
$1$ or $2$, giving (D2). No scheme mentions a particular site, a particular atom, or a
particular compound, only schematic letters, so (D3) holds.
\end{proof}

The discipline is not decoration; it has a theorem attached. Stratification forbids the
governing stratum from carrying instances of the governed one inside its own laws, and the
payoff is that the governing stratum cannot \emph{create} members of the governed one
either.

\begin{theorem}[Conservativity: the Boolean stratum is inert]\label{thm:conservativity}
For all $\Gamma\cup\{\varphi\}\subseteq\Atoms_E$:
\[
\Gamma\vdash\varphi \iff \Gamma\vdz\varphi .
\]
\end{theorem}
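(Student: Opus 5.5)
The plan is to close the equivalence into a cycle through the semantic consequence relation $\modsep$, using only results already established. Both directions pass through Theorem~\ref{thm:completeness} and Theorem~\ref{thm:subsumption}, so no new proof theory is needed. The stratification discipline of Proposition~\ref{prop:disciplineholds} explains why the result should hold, but it will not be invoked as a premise.

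For $\Gamma\vdz\varphi\Rightarrow\Gamma\vdash\varphi$, I would proceed by simulation. By the normal form in Theorem~\ref{thm:subsumption}, every derivable positive sequent has a derivation consisting of (id) or (triv), then at most one (sym), then at most one (sub). Each of these translates directly into $\mathsf{AC}_E$:
\begin{itemize}
\item (id) becomes the use of a hypothesis from $\Gamma$;
\item (triv) becomes an instance of scheme (T);
\item (sym) becomes modus ponens with an instance of (S);
\item (sub) becomes modus ponens with an instance of (W).
\end{itemize}
This yields a Hilbert derivation of length at most five from $\Gamma$. Alternatively, one can chain (i)$\Rightarrow$(ii) of Theorem~\ref{thm:subsumption} with the completeness half of Theorem~\ref{thm:completeness}.

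For $\Gamma\vdash\varphi\Rightarrow\Gamma\vdz\varphi$, which is the substantive direction, I would argue semantically. Soundness in Theorem~\ref{thm:completeness} gives $\Gamma\modsep\varphi$. Since $\Gamma\cup\{\varphi\}\subseteq\Atoms_E$, implication (ii)$\Rightarrow$(i) of Theorem~\ref{thm:subsumption} then gives $\Gamma\vdz\varphi$. The real content is contained in that step. Its proof evaluates $\varphi$ in the minimal model $\Sigma_\Gamma$ with $\apart_{\Sigma_\Gamma}=\cl(\Gamma^\circ)$. That model satisfies every instance of (T), (S), (W) and every member of $\Gamma$, so any Boolean derivation, including detours through negation and case splits, preserves truth in it. An atomic conclusion true in $\Sigma_\Gamma$ therefore lies in $\cl(\Gamma^\circ)$, which is exactly criterion (iv). This is also the first equality of Proposition~\ref{prop:shadow}, and I would cite it as an equivalent formulation.

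The only point I expect to need care is the passage through classical reasoning. One must confirm that soundness of $\vdash$ is stated for arbitrary premise sets $\Gamma\subseteq\Lang_E$, and that restricting $\Gamma$ to atoms does not invalidate the use of the minimal model. This holds because $\Gamma$ is finite up to the finiteness of $\Atoms_E$, so $\Gamma^\circ$ is a well-defined subset of $\dy{E}$ and Lemma~\ref{lem:generation} applies. I would finish by remarking that the argument shows the Boolean stratum adds no atom outside $\cl(\Gamma^\circ)$, which is the inertness asserted in the theorem's title.
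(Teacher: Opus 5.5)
Your proposal is correct and matches the paper's proof. You get $\Gamma\vdz\varphi\Rightarrow\Gamma\vdash\varphi$ by simulating (id), (triv), (sym), (sub) through hypotheses, (T), and modus ponens with (S) and (W), and the converse by soundness in Theorem~\ref{thm:completeness} followed by (ii)$\Rightarrow$(i) of Theorem~\ref{thm:subsumption}; your routing through the height-three normal form is only a convenience, since the paper simulates each rule directly (implicitly by induction on derivations).
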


\begin{proof}
($\Leftarrow$) Each rule of $\mathsf{SC}_0$ is simulated in $\mathsf{AC}_E$: (triv) by (T),
(sym) by (S) and modus ponens, (sub) by (W) and modus ponens, (id) by the premise itself.
($\Rightarrow$) If $\Gamma\vdash\varphi$ then $\Gamma\modsep\varphi$ by
Theorem~\ref{thm:completeness}, hence $\Gamma\vdz\varphi$ by
Theorem~\ref{thm:subsumption}. The proof is short; the content is the discipline it
certifies.
\end{proof}

\begin{corollary}\label{cor:noforcing}
No apartness assertion is derivable from positive apartness data by any amount of Boolean
reasoning, classical negation, disjunction, and case analysis included, unless it is
vacuous or a subsumption instance of a single datum. In particular the minimal model
$\Sigma_\Gamma$ of Theorem~\ref{thm:subsumption} satisfies exactly the
$\vdash$-consequences of $\Gamma$ among atoms.
\end{corollary}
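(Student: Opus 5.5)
The plan is to derive the corollary directly from Theorem~\ref{thm:conservativity}, combined with criterion (iv) of Theorem~\ref{thm:subsumption} and Proposition~\ref{prop:shadow}. Throughout, I read ``positive apartness data'' as a set $\Gamma\subseteq\Atoms_E$ and ``any amount of Boolean reasoning'' as derivability $\Gamma\vdash\alpha$ in $\mathsf{AC}_E$. This calculus already contains classical negation, disjunction, and case analysis, since its propositional base is a complete Hilbert system. So the first sentence of the corollary becomes: if $\Gamma\vdash\atomf{A}{B}$, then either $A=\emptyset$, or $B=\emptyset$, or some single premise $\atomf{A'}{B'}\in\Gamma$ dominates $(A,B)$ directly or after a swap.

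The first step takes $\Gamma\vdash\atomf{A}{B}$ and applies Theorem~\ref{thm:conservativity} ($\Rightarrow$) to get $\Gamma\vdz\atomf{A}{B}$. The second step applies (i)$\Rightarrow$(iv) of Theorem~\ref{thm:subsumption}, which yields exactly the stated trichotomy: vacuity, direct domination, or swapped domination by a single element of $\Gamma$. The qualifier ``single datum'' needs no extra argument, because criterion (iv) quantifies over one premise. I will also remark that the converse holds: vacuous atoms and subsumption instances are derivable by (T), (S), (W) and modus ponens. So ``unless'' can be sharpened to ``if and only if''.

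For the second sentence, the minimal model $\Sigma_\Gamma$ has $\apart_{\Sigma_\Gamma}=\cl(\Gamma^\circ)$. Definition~\ref{def:closure} shows that $\cl(\Gamma^\circ)$ is precisely the set of pairs satisfying criterion (iv). Combining this with the first step gives $\{\alpha\in\Atoms_E:\Gamma\vdash\alpha\}=\{\alpha:\Sigma_\Gamma\models\alpha\}$, which is a restatement of Proposition~\ref{prop:shadow}.

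I expect no real obstacle here. The only care point is interpretive: the phrase ``any amount of Boolean reasoning'' must be formalized as $\vdash$ from atomic premises only. If compound premises were allowed, for instance a disjunction of atoms, the claim would fail. I will state this restriction explicitly.
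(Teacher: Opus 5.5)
Your proposal is correct and follows the paper's own route. The paper states the corollary without a separate proof, as a direct reading of Theorem~\ref{thm:conservativity} combined with criterion (iv) of Theorem~\ref{thm:subsumption}, with the second sentence being Proposition~\ref{prop:shadow}. Your caveat that the data must be atomic for the ``single datum'' clause to hold is accurate and matches the scope $\Gamma\subseteq\Atoms_E$ of Theorem~\ref{thm:conservativity}.
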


Read together with Section~\ref{sec:margins}, the corollary closes a circle. The stability
layer says that perturbing every body by any radius below the margin changes no bit of the
relation; the syntactic layer says that no reasoning over the bits changes them either. The
geometry realizes every consistent pattern, the closure operator names the only patterns
that are forced, and the stratified calculus is the syntax of exactly that forcing, no
more and, by completeness, no less.

\section{Concluding remarks}\label{sec:conclusion}

Three laws, none of them surprising on its own, turn out to exhaust what disjointness of
joint convex hulls can impose on a finite index set once the ambient dimension is left
free. Theorem~\ref{thm:hierarchy} locates the exact moment at which the dimensions stop
mattering: below $|E|-1$ each additional dimension is genuinely new, strictly enlarging
what can be refuted, and each step is witnessed by a single Radon-type formula; from
$|E|-1$ onward the dimensions repeat one another and the consequence relation, unable to
tell them apart, settles into its limit. What survives the settling is small and exactly
known: a closure operator, a four-rule calculus whose derivations normalize in three
steps, a minimal model. The stability layer of Section~\ref{sec:margins} adds that this
core is carried with slack rather than by accident: every separation in the canonical
realization holds with a fixed margin, every body can be thickened and smoothed without
disturbing a single bit, and the relation sits in an equilibrium that perturbation below
the margin cannot reach. The syntactic layer adds the complementary guarantee: no Boolean
argument, however elaborate, forces a bit that the data did not already contain by
subsumption.

The contrast with the fixed-dimension landscape is sharp. Realizability of point
configurations in a prescribed dimension is governed by universality phenomena
\cite{Mnev88}, and within separoid theory the fixed-dimension invariant $\gd$ already
separates structures that the present language, by design and by
Corollary~\ref{cor:invariance}, identifies; Proposition~\ref{prop:simploid} and
Theorem~\ref{thm:hierarchy} mark the boundary precisely. A reader who came for the wilder
questions will find them exactly where they were left, in the coordinates. The
dimension-free relation itself asks for no rescue from its plainness: it is finitely
axiomatized, effectively and robustly realized, decided at the propositional price, and
conservative over its own positive fragment. The stable core of separation talk is small,
exactly known, and entirely subsumptive, and the paper's claim is that holding the whole
theory, every scene, every margin, every derivation, steadily inside that core is not a
limitation of the language but the precise shape of what the language, on its own, was
ever going to mean.

\end{document}